\def\Reachname{\ensuremath{{\sf Reach}}}
\def\Reach#1{\Reachname\left(#1\right)}
\def\Succname{\ensuremath{{\sf Succ}}}
\def\Succ#1{\Succname\left(#1\right)}
\def\Succlab#1#2{\Succname_{#2}\left(#1\right)}
\def\bad{\ensuremath{{\sf Bad}}}
\def\v{\ensuremath{{\sf v}}}
\def\Aa{\ensuremath{\mathcal{A}}}
\def\win{\ensuremath{{\sf Win}}}
\def\Attr{\ensuremath{{\sf Attr}}}
\def\maxwin{\ensuremath{{\sf MaxWin}}}
\def\dc#1{\ensuremath{\downarrow\!\left(#1\right)}}
\def\dco#1#2{\ensuremath{\downarrow^{#2}\!\left(#1\right)}}
\def\uc#1{\ensuremath{\uparrow\!\left(#1\right)}}
\def\uco#1#2{\ensuremath{\uparrow^{#2}\!\left(#1\right)}}
\def\mss{{\sc MinSizeStrat}\xspace}
\def\XX{\mathcal{X}}
\def\CC{\mathcal{C}}
\def\LL{\mathcal{L}}
\def\SS{\mathcal{S}}
\def\cdom#1{\ensuremath{{\sf Supp}\!\left(#1\right)}}
\def\WR#1{\ensuremath{\mathcal{W}\mathcal{R}_{#1}}}
\def\maxac#1{\ensuremath{\left\lceil #1\right\rceil}}
\def\minac#1{\ensuremath{\left\lfloor #1\right\rfloor}}
\def\hsg{\ensuremath{\hat{\sigma}_G}}
\def\badseven{%
\raisebox{-2pt}{\begin{tikzpicture}
  \node[state, fill=gray, scale=.4]  {\huge $7$};
\end{tikzpicture}} %
} \def\badeight{%
  \raisebox{-2pt}{\begin{tikzpicture} \node[state, rectangle,
      fill=gray, scale=.4] {\huge $8$};
\end{tikzpicture}} %
}%
\def\Bstate#1{%
\raisebox{-2pt}{\begin{tikzpicture}
  \node[state, rectangle,  scale=.4]   {\huge $#1$};
\end{tikzpicture}} %
}%
\def\Astate#1{%
\raisebox{-2pt}{\begin{tikzpicture}
  \node[state,  scale=.4]   {\huge $#1$};
\end{tikzpicture}} %
}%
\def\Bgraystate#1{%
\raisebox{-2pt}{\begin{tikzpicture}
  \node[state, rectangle,  scale=.4,fill=gray!20 ]   {\huge $#1$};
\end{tikzpicture}} %
}%
\def\Agraystate#1{%
\raisebox{-2pt}{\begin{tikzpicture}
  \node[state,  scale=.4, fill=gray!20]   {\huge $#1$};
\end{tikzpicture}} %
}%
\def\wbs{\ensuremath{\unrhd}}
\def\nwbs{\ensuremath{\not\!\!\unrhd}}
\title{Synthesising Succinct Strategies\\ in Safety Games\thanks{This
    research has been supported by the Belgian F.R.S./FNRS FORESt
    grant, number 14621993.\hfill\break The research leading to these
    results has received funding from the European Union Seventh
    Framework Programme (FP7/2007-2013) under Grant Agreement n°601148
    (CASSTING)}}
\author{Gilles Geeraerts \and Jo\"el Goossens \and Am\'elie Stainer}
\institute{Universit\'e libre de Bruxelles, D\'epartement d'Informatique, Brussels, Belgium}
\begin{document}
\maketitle
\thispagestyle{plain}
 \begin{abstract}
   Finite turn-based safety games have been used for very different
   problems such as the synthesis of linear temporal logic
   (LTL)~\cite{FJR-fmsd11}, the synthesis of schedulers for computer
   systems running on multiprocessor platforms~\cite{BM-esa10}, and
   also for the determinisation of timed
   automata~\cite{BSJK-fossacs11}. In these contexts, games are
   implicitly defined, and their size is at least exponential in the
   size of the input. Nevertheless, there are natural relations
   between states of arenas of such games. We first formalise the
   properties that we expect on the relation between states, thanks to
   the notion of alternating simulation. Then, we show how such
   simulations can be exploited to (1) improve the running time of the
   OTFUR algorithm\cite{CDFLL-concur05} to compute winning strategies
   and (2) obtain a succinct representation of a winning strategy. We also
   show that our general theory applies to the three applications
   mentioned above.
 \end{abstract}

\section{Introduction}
Finite, turn-based, safety games are arguably one of the most simple,
yet relevant, classes of games. They are played by two players (A and
B) on a finite graph (called the arena), whose set of vertices is
partitioned into Player A and Player B vertices, (that we call $A$ and
$B$-states respectively). A play is an infinite path in this graph,
and is obtained by letting the players move a token on the
vertices. Initially, the token is on a designated initial vertex. At
each round of the game, the player who owns the vertex marked by the
token decides on which successor node to move it next. A play is
winning for $A$ if the token never touches some designated bad nodes,
otherwise, it is winning for $B$.

Such games are a natural model to describe the interaction of a
potential controller with a given environment, where the aim of the
controller is to avoid the bad states that model system failures. In
this framework, computing a winning strategy for the player amounts to
\emph{synthesising} a control policy that guarantees no bad state will
be reached, no matter how the environment behaves. Safety games have
also been used as a tool to solve other problems such as LTL
realisability \cite{FJR-fmsd11}, real-time scheduler synthesis
\cite{BM-esa10} or timed automata determinisation
\cite{BSJK-fossacs11}. We will come back to those applications.

To illustrate our ideas, we consider, throughout the paper, a
variation of the well-known Nim game \cite{Bouton190235}, where
players have to \emph{fill} an urn instead of removing balls from
it. The game is played by two players ($A$ and $B$) as
follows. Initially, an heap of $N$ balls is shared by the players, and
the urn is empty. Then, the players play by turn and pick either $1$
or $2$ balls from the heap and put them into the urn. A player looses
the game if he is the last to play (i.e., the heap is empty after he
has played). An arena modeling this game (for $N=8$) is given in
Fig.~\ref{fig:substrgame} (top), where $A$-states are circles,
$B$-states are squares, and the numbers labelling the states represent
the number of balls \emph{inside the urn}. The arena obtained from
Fig.~\ref{fig:substrgame} \emph{without the dotted edges} faithfully
models the description of the urn-filling game we have sketched above
(assuming Player~$A$ plays first). We say that a state is winning if,
from this state, Player $A$ has a \emph{winning strategy}, i.e. he can
always win the game whatever Player $B$ does (and vice-versa for
losing states). For instance, \badeight is loosing for
Player~$A$. Indeed, in this state, all $8$ balls are now into the urn,
hence the heap is empty. Moreover, \badeight belongs to Player~$B$,
hence Player~$A$ has necessarily played just before reaching it. State
\badseven can also be declared as losing, since Player~$A$ has no
other choice than tossing the last ball into the urn. Thus, in this
game, the objective of Player~$A$ is to avoid the set
$\bad=\left\{\badseven,\badeight\right\}$.  It is well-known
\cite{Bouton190235} that a simple characterisation of the set of
winning states\footnote{In order to make our example more interesting
  (this will become clear in the sequel), we have added the three
  \emph{dotted edges} from \Bstate{$7$} to \Astate{$6$} and
  \Astate{$5$} respectively, and from \Bgraystate{$6$} to \Astate{$5$}
  although those actions are not permitted in the original
  game. However, observe that those extra edges do not modify the set
  of winning states.} can be given. For each state $v$, let
$\lambda(v)$ denote its label. Then, the winning states (in white in
Fig~\ref{fig:substrgame}) are all the $A$-states $v$
s.t. $\lambda(v)\mod 3\neq 1$ plus all the $B$-states $v'$
s.t. $\lambda(v')\mod 3=1$.

\begin{figure}[t]
\begin{center}
\scalebox{0.75}{
\begin{tikzpicture}
  \tikzstyle{every state}=[text=black]
\everymath{\scriptstyle}

  \node[state, fill=white] (A8)  {\huge $0$};
  \node[state, fill=white] (A6) [right of=A8, node distance=2cm] {\huge $2$};
    \node[state, fill=white] (A5) [right of=A6, node distance=2cm] {\huge $3$};
  \node[state, fill=gray!20] (A4) [right of=A5, node distance=2cm] {\huge $4$};
  \node[state, fill=white] (A3) [right of=A4, node distance=2cm] {\huge $5$};
  \node[state, fill=white] (A2) [right of=A3, node distance=2cm] {\huge $6$};
  \node[state, fill=gray] (A1) [right of=A2, node distance=2cm] {\huge $7$};
  
  \node[state, rectangle, fill=white] (B7) [below of=A6, node distance=1.7cm,xshift=-2cm] {\huge $1$};
    \node[state, rectangle, fill=gray!20] (B6) [right of=B7, node distance=2cm] {\huge $2$};
      \node[state, rectangle, fill=gray!20] (B5) [right of=B6, node distance=2cm] {\huge $3$};
  \node[state, rectangle, fill=white] (B4) [right of=B5, node distance=2cm] {\huge$4$};
  \node[state, rectangle, fill=gray!20] (B3) [right of=B4, node distance=2cm] {\huge $5$};
  \node[state, rectangle, fill=gray!20] (B2) [right of=B3, node distance=2cm] {\huge $6$};
  \node[state, rectangle, fill=white] (B1) [right of=B2, node distance=2cm] {\huge $7$};
  \node[state, rectangle, fill=gray] (B0) [right of=B1, node distance=2cm] {\huge $8$};
  
       \draw[-latex'](-.7,0) -- (A8); 
       
\draw[-latex',line width=1.3pt](A8) -- (B7);
\draw[-latex'](A8) -- (B6);
\draw[-latex'](A6) -- (B5);
\draw[-latex',line width=1.3pt](A6) -- (B4);
\draw[-latex',line width=1.3pt](A5) -- (B4);
\draw[-latex'](A5) -- (B3);
\draw[-latex'](A4) -- (B3);
\draw[-latex'](A4) -- (B2);
\draw[-latex'](A3) -- (B2);
\draw[-latex',line width=2pt](A3) -- (B1);
\draw[-latex',line width=2pt](A2) -- (B1);
\draw[-latex'](A2) -- (B0) ;
\draw[-latex'](A1) edge (B0) ;

\draw[-latex'](B7) -- (A6);
\draw[-latex'](B7) -- (A5);
\draw[-latex'](B6) -- (A5);
\draw[-latex'](B6) -- (A4);
\draw[-latex'](B5) -- (A4);
\draw[-latex'](B5) -- (A3);
\draw[-latex'](B4) -- (A3) ;
\draw[-latex'](B4) -- (A2);
\draw[-latex'](B3) -- (A2) ;
\draw[-latex'](B3) -- (A1);
\draw[-latex'](B2) -- (A1) ;
\draw[-latex'](B1) edge[dotted, bend left=12] (A2);
\draw[-latex'](B1) edge[dotted, bend left=12] (A3);
\draw[-latex'](B2) edge[dotted, bend left=12] (A3);

\end{tikzpicture}}

 \centering
\medskip  \begin{tabular}{c@{\hskip .5cm}c@{\hskip .5cm}c}
    (a) Winning strategy& (b) Winning $\star$-strategy & (c) $\wbs_0$-Winning $\star$-strategy\\
    \begin{tabular}{c|c@{\hskip .1cm}||@{\hskip .1cm}c|c}
      node&succ.&node&succ.\\\hline
      \Astate{0}&\Bstate{1}&\Astate{5}&\Bstate{7}\\
      \Astate{2}&\Bstate{4}&\Astate{6}&\Bstate{7}\\
      \Astate{3}&\Bstate{4}&\badseven&\badeight\\
      \Agraystate{4}&\Bstate{5}\\
    \end{tabular}
    &
    \begin{tabular}{c|c@{\hskip .1cm}||@{\hskip .1cm}c|c}
       node&succ.&node&succ.\\\hline
      \Astate{0}&\Bstate{1}&\Astate{5}&\Bstate{7}\\
      \Astate{2}&\Bstate{4}&\Astate{6}&\Bstate{7}\\
      \Astate{3}&\Bstate{4}\\
    \end{tabular}
    &
    \begin{tabular}{c|c}
       node&succ.\\\hline
      \Astate{5}&\Bstate{7}\\
      \Astate{6}&\Bstate{7}\\
    \end{tabular}
  \end{tabular}

\caption{Urn-filling Nim game with $N=8$, and three winning strategies.}\label{fig:substrgame}\label{fig:strategies}
\end{center}
  \vspace{-.7cm}
\end{figure}
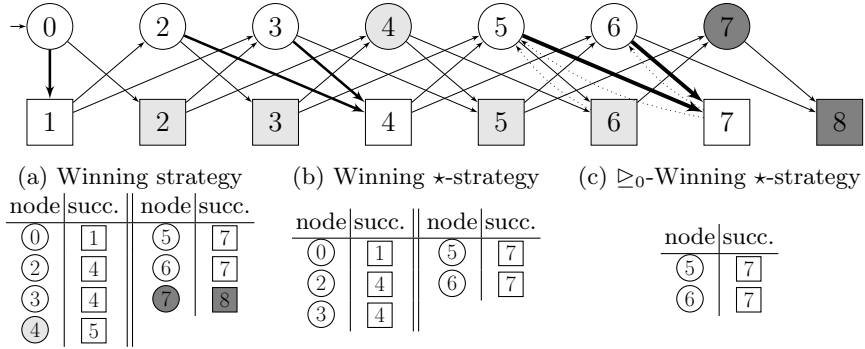

One of the nice properties of safety games is that they admit
\emph{memory-less winning strategies}, i.e., if Player $A$ has a
winning strategy in the game, then he has a winning strategy that
depends only on the current state (in other words, this strategy is a
function from the set of $A$-states to the set of
$B$-states). Memory-less strategies are often regarded as simple and
straightforward to implement (remember that the winning strategy is
very often the actual control policy that we want to implement in,
say, an embedded controller). Yet, this belief falls short in many
practical applications such as the three mentioned above because the
arena is not given explicitly, and its size is \emph{at least
  exponential} in the size of the original problem instance. Hence,
two difficulties arise, even with memory-less strategies. First, the
computation of winning strategies might request the traversal of the
whole arena, which might be intractable in practice. Second, a naive
implementation of a winning strategy $\sigma$ by means of an
exponentially large table, mapping each $A$-state $v$ to its safe
successor $\sigma(v)$, is not realistic. For example, a winning
strategy for our urn-filling game is given in
Fig.~\ref{fig:strategies}~(a), and it contains one line for each
$A$-state, even for the losing ones.

In this work, we consider the problem of computing \emph{efficiently}
winning strategies (for Player $A$) that can be \emph{succinctly}
represented. To formalise this problem, we consider
\emph{$\star$-strategies}, which are \emph{partial functions} defined
on a subset of $A$-states only. A $\star$-strategy can be regarded as
an \emph{abstract representation} of a family of (plain) strategies,
that we call \emph{concretisations} of the $\star$-strategies (they
are all the strategies that agree with the $\star$-strategy).  Then,
we want to compute \emph{succinct} $\star$-strategies that are defined
on the smallest possible number of states, but that are still safe
because all their concretisations are winning. In the example, a
winning $\star$-strategy of minimal size is given in
Fig.~\ref{fig:strategies}~(b). Remark that we can now omit two
$A$-states (the losing ones), because those states are not reachable
anyway, but we still need to remember what to play on all winning
states.  Unfortunately, we show (see
Section~\ref{sec:succinct-strategies}) that, unless P$=$NP, no
polynomial-time algorithm exists to compute minimal
$\star$-strategies, because the associated decision problem is
NP-complete. This holds when the arena is given explicitly, so the
difficulty is exacerbated in practice, where the arena is much larger
than the problem instance.

To cope with this complexity, we consider heuristics inspired from the
\emph{antichain} line of research \cite{DR10}. Antichain techniques
rely on a \emph{simulation partial order} on the states of the
system. This simulation is exploited to \emph{prune} the state space
that the algorithms need to explore, and to obtain \emph{efficient
  data structures} to store the set of states that the algorithms need
to maintain. These techniques have been applied to several relevant
problems, such as LTL model-checking \cite{DR10} or multi-processor
schedulability \cite{GGL-rts13} with remarkable performance
improvements of several orders of magnitude.

While a general theory of antichain has been developed \cite{DR10} in
the setting of \emph{automata-based models} (that are well-suited for
verification), they have been scarcely applied in the setting of
\emph{games}. One notable exception is the aforementioned work of
Filiot \textit{et al.} on LTL realisability \cite{FJR-fmsd11} where
the problem of realisability is reduced to a safety game, and a
simulation on the game states is exploited in an efficient algorithm
tailored for those games. The heuristics are thus limited to this
peculiar case, and their correctness cannot always be deduced from the
notion of \emph{simulation} but requires \emph{ad hoc} proofs.

In this paper, we advocate the use of a different notion,
i.e. \emph{turn-based alternating simulations} (tba-simulations for
short), which are adapted from the alternating simulations introduced
in \cite{AHKV-concur98}. Tba-simulations allow us to develop an
elegant and general theory, that extends the ideas of Filiot
\textit{et al.}  and that is applicable to a broad class of safety
games (including the three examples mentioned above). In our running
example, a tba-simulation $\wbs_0$ exists and is given by: $ v \wbs_0
v'$ iff $v$ and $v'$ belong to the same player, $\lambda(v)\geq
\lambda(v')$ and $\lambda(v)\mod 3= \lambda(v')\mod 3$. Then, it is
easy to see that the winning strategy of Fig.~\ref{fig:strategies}~(a)
exhibits some kind of \emph{monotonicity} wrt $\wbs_0$: for instance
$\Astate{5}\wbs_0\Astate{2}$, and the winning strategy consists in
putting two balls in the urn in both cases. Hence, we can further
reduce the representation of the strategy to the one in
Fig.~\ref{fig:strategies}~(c). Indeed, while not all concretisations
of this strategy are winning (for instance, the strategy does not say
what to play in \Astate{3} and obviously going to \Bgraystate{4}
should be avoided), all concretisations compatible with $\wbs$
\emph{are} winning. It is then easy to implement \emph{succinctly}
this strategy: only the table in Fig.~\ref{fig:strategies}~(c) needs
to be stored, because he relation $\wbs$ can be directly computed on
the description of the states. All these intuitions are formalised in
Section~\ref{sec:structured-games}, where we show that, in general, it
is sufficient to store the strategy on the maximal \emph{antichain} of
the reachable winning states, i.e., on a set of states that are all
incomparable by the tba-simulation, and thus very compact.

Next, we present, in Section~\ref{sec:effic-comp-succ} \emph{an
  efficient on-the-fly algorithm to compute such succinct
  $\star$-strategies}, which is adapted from the classical OTFUR
algorithm to solve safety games \cite{CDFLL-concur05}. Our algorithm
incorporates several heuristics that stem directly from the definition
of tba-simulation (in particular, it incorporates an optimisation that
was not present in the work of Filiot et
al. \cite{FJR-fmsd11}). Finally, in Section~\ref{sec:applications}, we
devise a criterion that allows to \emph{determine when a simulation
  relation on a game arena is also a tba-simulation}.  We show that
the safety games one considers in three applications introduced above
(LTL realisability, real-time feasibility and determinisation of timed
automata) respect this criterion, which demonstrates the wide
applicability of our approach.

\section{Preliminaries}
\paragraph{Turn-based safety games} A \emph{finite turn-based game
  arena} is a tuple $\Aa=(V_A,V_B,E,I)$, where $V_A$ and $V_B$ are the
finite sets of states controlled by Players~$A$ and $B$ respectively;
$E\subseteq (V_A\times V_B) \cup (V_B\times V_A)$ is the set of edges;
and $I\in V_A$ is the initial state. We denote by $V$ the set $V_A\cup
V_B$. Given a finite arena $\Aa=(V_A,V_B,E,I)$ and a state $v\in V$,
we let $\Succ{\Aa,v}=\{v'\mid (v,v')\in E\}$ and
$\Reach{\Aa,v}=\{v'\mid (v,v')\in E^*\}$, where $E^*$ denotes the
reflexive and transitive closure of $E$. We lift the definitions of
$\Reachname$ and $\Succname$ to sets of states in the usual way.
A \emph{finite turn-based safety game} is a tuple
$G=(V_A,V_B,E,I,\bad)$ where $(V_A, V_B, E, I)$ is a finite turn-based
game arena, and $\bad\subseteq V$ is the set of bad states that $A$
wants to avoid. The definitions of $\Reachname$ and $\Succname$ carry
on to games: $\Reach{(\Aa,\bad),v}=\Reach{\Aa,v}$ and
$\Succ{(\Aa,\bad),v}=\Succ{\Aa,v}$. When the game is clear from the
context, we often omit it.

\paragraph{Plays and strategies} During the game, players interact to
produce a play, which is a finite or infinite path in the graph
$(V,E)$. Players play turn by turn, by moving a \emph{token} on the
game's states. Initially, the token is on state $I$. At each turn, the
player who controls the state marked by the token gets to choose the
next state.
A \emph{strategy} for $A$ is a function $\sigma:V_A \rightarrow V_B$
such that for all $v\in V_A$, $(v,\sigma(v))\in E$.  We extend
strategies to set of states $S$ in the usual way: $\sigma(S)=
\{\sigma(v)\mid v\in S\}$. A strategy $\sigma$ for $A$ is
\emph{winning for a state $v\in V$} iff no bad states are reachable
from $v$ in the graph $G_\sigma$ obtained from $G$ by removing all the
moves of $A$ which are not chosen by $\sigma$,
i.e. $\Reach{G_\sigma,v}\cap\bad=\emptyset$, where $G_\sigma=(V_A,
V_B, E_\sigma, I,\bad)$ and $E_\sigma=\{(v,v')\mid (v,v')\in E\wedge
v\in V_A\implies v'=\sigma(v)\}$. We say that a strategy $\sigma$ is
\emph{winning} in a game $G=(V_A,V_B,E,I,\bad)$ iff it is winning in
$G$ for $I$.

\paragraph{Winning states and attractors} A state $v\in V$ in $G$ is
\emph{winning} iff there exists a strategy $\sigma$ that is winning in
$G$ for~$v$. We denote by $\win$ the set of winning states (for
Player~$A$). By definition, any strategy such that
$\sigma(\win)\subseteq \win$ is thus winning.  Moreover, it is
well-known that the set $\win$ of winning states can be computed in
polynomial time (in the size of the arena), by computing the so-called
\emph{attractor} of the unsafe states. In a game $G=(V_A, V_B, E,
I,\bad)$, the sequence $(\Attr_i)_{i\geq 0}$ of attractors (of the
$\bad$ states) is defined as follows. $\Attr_0=\bad$ and for all $i\in
\mathbb{N}$, $\Attr_{i+1}= \Attr_i \cup \{v\in V_B \mid
\Succ{v}\cap\Attr_i\neq\emptyset\}\cup\{v\in V_A\mid
\Succ{v}\subseteq\Attr_i\}$.  It is well-known that, for all $i\geq
0$, $v\in\Attr_i$ means that, from $v$, Player $B$ can force the game
to visit $\bad$ in at most $i$ steps. For finite games, the sequence
stabilises after a finite number of steps on a set of states that we
denote $\Attr_\bad$. Then, $v$ belongs to $\Attr_\bad$ iff Player $B$
can force the game to visit $\bad$ from $v$. Thus, the set of winning
states for player A is $\win= V\setminus \Attr_\bad$.  It is then
straightforward to compute a winning strategy $\sigma$: for all $v\in
V_A\cap\win$, we let $\sigma(v)=v'$, where $v'\in\win$. Such a node is
guaranteed to exist by definition of the attractor.

\paragraph{Partial orders, closed sets and antichains}
Fix a finite set $S$. A relation $\wbs\in S\times S$ is a partial
order iff $\wbs$ is reflexive, transitive and antisymmetric, i.e. for
all $s\in S$: $(s,s)\in \wbs$ (reflexivity); for all $s,s',s''\in S$,
$(s,s')\in \wbs$ and $(s',s'')\in \wbs$ implies $(s,s'')\in \wbs$
(transitivity); and for all $s,s'\in S$: $(s,s')\in \wbs$ and
$(s',s)\in \wbs$ implies $s=s'$ (antisymmetry).  As usual, we often
write $s\wbs s'$ and $s\nwbs s'$ instead of $(s,s')\in \wbs$ and
$(s,s')\not\in \wbs$, respectively. The $\wbs$-\emph{downward closure}
$\dco{S'}{\wbs}$ of a set $S'\subseteq S$ is defined as
$\dco{S'}{\wbs}= \{s \mid \exists s'\in S', s'\wbs
s\}$. Symmetrically, the \emph{upward closure} $\uco{S'}{\wbs}$ of
$S'$ is defined as: $\uco{S'}{\wbs}=\{s \mid \exists s'\in S': s\wbs
s'\}$. Then, a set $S'$ is \emph{downward closed} (resp. \emph{upward
  closed}) iff $S' = \dco{S'}{\wbs}$ (resp. $S' =
\uco{S'}{\wbs}$). When the partial order is clear from the context, we
often write $\dc{S}$ and $\uc{S}$ instead of $\dco{S}{\wbs}$ and
$\uco{S}{\wbs}$ respectively. Finally, a subset $\alpha$ of some set
$S'\subseteq S$ is an \emph{antichain} on $S'$ with respect to $\wbs$
if for all $s, s' \in \alpha$, $s\nwbs s'$.  An antichain $\alpha$ on
$S'$ is said to be a set of \emph{maximal elements of $S'$} (or,
simply \emph{a maximal antichain} of $S'$) iff for all $s_1\in S'$
there is $s_2 \in \alpha$: $s_2\wbs s_1$. Symmetrically, an antichain
$\alpha$ on $S'$ is a set of \emph{minimal elements of $S'$} (or
\emph{a minimal antichain} of $S'$) iff for all $s_1\in S'$ there is
$s_2 \in \alpha$: $s_1\wbs s_2$. It is easy to check that if $\alpha$
and $\beta$ are maximal and minimal antichains of $S'$ respectively,
then $\dc{\alpha}=\dc{S'}$ and $\uc{\beta}=\uc{S'}$. Intuitively,
$\alpha$ ($\beta$) can be regarded as a symbolic representation of
$\dc{S'}$ ($\uc{S'}$), which is of minimal size in the sense that it
contains no pair of $\wbs$-comparable elements. Moreover, since $\wbs$
is a partial order, each subset $S'$ of the finite set $S$ admits a
unique minimal and a unique maximal antichain, that we denote by
$\minac{S'}$ and $\maxac{S'}$ respectively. Observe that one can
always effectively build a $\maxac{S'}$ and $\minac{S'}$, simply by
iteratively removing from $S'$, all the elements that are strictly
$\wbs$-dominated by (for $\maxac{S'}$) or that strictly dominate (for
$\minac{S'}$) another one.

\paragraph{Simulation relations} Fix an arena
$G=(V_A,V_B,E,I,\bad)$. A relation $\wbs\subseteq V_A\times V_A \cup
V_B\times V_B$ is a \emph{simulation relation compatible\footnote{See
    \cite{FJR-fmsd11} for an earlier definition of a simulation
    relation compatible with a set of states.} with $\bad$} (or simply
a \emph{simulation}) iff it is a partial order\footnote{Observe that
  our results can be extended to the case where the relations are
  \emph{preorders}, i.e. transitive and reflexive relations.} and for
all $(v_1,v_2)\in \wbs$: either $v_1\in \bad$ or:
\begin{inparaenum}[(i)]
\item for all $v_2'\in\Succ{v_2}$, there is $v_1'\in\Succ{v_1}$
  s.t. $v_1'\wbs v_2'$ and
\item \label{item:3}$v_2 \in \mathsf{Bad}$ implies that $v_1 \in
  \mathsf{Bad}$.
\end{inparaenum}
On our example, the relation $\wbs_0 =\{(v,v')\in V_A\times V_A \cup
V_B \times V_B\mid \lambda(v)\geq \lambda(v')\textrm{ and }
\lambda(v)\mod 3= \lambda(v')\mod 3\}$ is a simulation relation
compatible with $\bad=\left\{\badseven,\badeight\right\}$.  Moreover,
$\win=\{v\in V_A\mid \lambda(v)\mod 3\neq 1\}\cup\{v\in V_B\mid
\lambda(v)\mod 3= 1\}$ is downward closed for $\wbs_0$ and its
complement (the set of losing states), is upward closed.  Finally,
$\win$ admits a single maximal antichain for $\wbs_0$:
$\maxwin=\left\{\Bstate{7}, \Astate{6}, \Astate{5}\right\}$.

\section{Succinct strategies\label{sec:succinct-strategies}}
Let us first formalise the notion of \emph{succinct strategy}. As
explained in the introduction, a naive way to implement a memory-less
strategy $\sigma$ is to store, in an appropriate data structure, the
set of pairs $\{(v, \sigma(v))\mid v\in V_A\}$, and implement a
controller that traverses the whole table to find action to perform
each time the system state is updated. While the definition of
strategy asks that $\sigma(v)$ be defined for all $A$-states $v$, this
information is sometimes indifferent, for instance, when $v$ is not
reachable in $G_\sigma$. Thus, we want to reduce the number of states
$v$ s.t. $\sigma(v)$ is crucial to keep the system safe.

\paragraph{$\star$-strategies} We introduce the notion of
$\star$-strategy to formalise this idea: a $\star$-strategy is a
function $\hat{\sigma}:V_A\mapsto V_B\cup\{\star\}$, where $\star$
stands for a `don't care' information. We denote by
$\cdom{\hat{\sigma}}$ the \emph{support} $\hat{\sigma}^{-1}(V_B)$ of
$\hat{\sigma}$, i.e. the set of nodes $v$
s.t. $\hat{\sigma}(v)\neq\star$.  Such $\star$-strategies can be
regarded as a representation of a family of concrete strategies. A
\emph{concretisation} $\sigma$ of a $\star$-strategy $\hat{\sigma}$ is
a strategy $\sigma$ s.t. for all $v\in V_A$, $\hat{\sigma}(v) \neq
\star$ implies $\hat{\sigma}(v) = \sigma(v)$. A $\star$-strategy
$\hat{\sigma}$ is \emph{winning} if every concretisation of
$\hat{\sigma}$ is winning (intuitively, $\hat{\sigma}$ is winning if
$A$ always wins when he plays according to $\hat{\sigma}$, whatever
choices he makes when $\hat{\sigma}$ returns $\star$).  The
\emph{size} of a $\star$-strategy $\hat{\sigma}(v)$ is the size of 
 $\cdom{\hat{\sigma}}$. On our running example, the strategy
$\sigma$ displayed in Fig.~\ref{fig:strategies}~(b) -- assuming the
lines where $\sigma(v)=\star$ have been omitted -- is a winning
$\star$-strategy of minimal size.

\paragraph{Computing succinct $\star$-strategies} Our goal is to
compute \emph{succinct} $\star$-strategies, defined as
$\star$-strategies of minimal size. To characterise the hardness of
this task, we consider the following decision problem, and prove that
it is NP-complete:
\begin{problem}[\mss]
  Given a finite turn-based safety game $G$ and an integer
  $k\in\mathbb{N}$, decide whether there is a winning $\star$-strategy
  of size smaller than $k$ in $G$.
\end{problem}
\begin{theorem}\label{the:mss-np-complete}
  \mss is NP-complete.
\end{theorem}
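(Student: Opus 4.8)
The plan is to prove NP-completeness in two parts: membership in NP and NP-hardness.
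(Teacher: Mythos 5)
What you have written is not a proof; it is only the generic outline that every NP-completeness proof follows. Neither half is actually carried out. For membership in NP you would still need to exhibit a polynomial-time verifiable certificate: for instance, guess the support $\cdom{\hat{\sigma}}$ (of size at most $k$) together with the values $\hat{\sigma}(v)$, and then check in polynomial time that \emph{every} concretisation is winning --- which amounts to checking that $\bad$ is unreachable in the graph where each $v\in\cdom{\hat{\sigma}}$ keeps only the edge to $\hat{\sigma}(v)$ and every other $A$-state keeps all its outgoing edges (so that the $\star$-states are effectively handed to the adversary). None of this appears in your proposal, and the universal quantification over concretisations is exactly the point one must address to see that the check is still polynomial.

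The hardness direction is where all the substance lies, and it is entirely missing: you do not even name the problem you would reduce from, let alone give a construction. The paper reduces from \textsc{Sat}: from a formula with $m$ variables and $n$ clauses it builds a game with variable-gadgets $X_i$ (whose successors encode the truth value chosen for $x_i$), clause-gadgets $C_j$, and literal states, all of which have an escape edge to $\text{bad}$ so that any winning $\star$-strategy is forced to be defined on all of $\XX\cup\CC$ and on the $m$ literal states selected at the $X_i$'s; setting $k=2m+n$ then makes a winning $\star$-strategy of size $k$ exist exactly when the clause choices can be confined to the already-selected literals, i.e.\ when the induced assignment satisfies the formula. Without a reduction of this kind (and a two-directional correctness argument for it), your proposal establishes nothing.
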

\begin{proof}
  Let $X=\{x_1,\cdots,x_m\}$ be a set of $m$ Boolean variables. Let
  $\phi = \land_{1\le i\le n}C_i$ be a \textsc{Sat} formula (with $n$
  clauses) such that for all $1\le i\le n$, $C_i = \lor_{1 \le j \le
    n_i} l^i_j$ with $l^i_j \in L=\{x_1,\cdots, x_m, \lnot x_1,\cdots,
  \lnot x_m\}$ ($L$ is called the set of literals). Then, let us
  build a finite turn-based safety game $G$ and an integer $k$ in
  polynomial time as follows.  We let $G=(V_A,V_B,E,I,\mathsf{Bad})$
  where:
  \begin{itemize}
  \item $V_A = \{\text{init}^A\}\cup\LL^A\cup X\cup C$, where
    $\LL^A=\{x^A_1,\cdots, x^A_m, \lnot x^A_1,\cdots, \lnot x^A_m\}$,
    $\XX=\{X_1,\cdots, X_m\}$ and $\CC=\{C_1,\cdots,C_n\}$
  \item $V_B = \{\text{init}^B,\text{bad}\}\cup \{x^B_1,\cdots, x^B_m,
    \lnot x^B_1,\cdots, \lnot x^B_m\}$
  \item $E =
    \{(\text{init}^A,\text{init}^B)\} %
    \cup\{\text{init}^B\}\times(\XX\cup \CC)\\ %
    \cup \left\{
      \begin{array}{c|l}
        (X_i,x^B_i), (X_i,\lnot x^B_i),(x^B_i,x^A_i),(\lnot x^B_i,\lnot x^A_i),\\
        (x^A_i,\text{init}^B), (\lnot x^A_i,\text{init}^B),(x^A_i,\text{bad}), 
        &
          1\le i \le m\\
          (\lnot x^A_i,\text{bad}),(X_i,\text{bad}),(C_i,\text{bad})
      \end{array}
      \right\}\\
    \cup (\LL^A\cup \CC\cup \XX)\times \{\text{bad}\}\\ %
    \cup \{(C_i,x^B_h)\mid 1\le i \le n, \exists 1 \le j \le n_i:  l^i_j=x_h\}\\ %
    \cup\{(C_i,\lnot x^B_h) \; |\;1\le i \le n, \exists 1 \le j \le n_i: l^i_j=\lnot x_h\}\\ %
    $
  \item $I = \text{init}^A$
  \item $\mathsf{Bad} = \{\text{bad}\}$
  \end{itemize}
  Finally, let $k=2\times m + n$. An example of the construction for
  the formula $\phi=(x_1\lor x_2 \lor \lnot x_3)\wedge (\lnot x_1\lor
  x_2 \lor x_3) $ is given in Figure~\ref{fig:redex} (note that
  $\text{bad}$ and the $(\lnot)x_i^B$'s) have been duplicated to
  enhance readability). States of $A$ are circles and states of $B$
  are squares.

  Let us show that $\phi$ is satisfiable iff there is winning
  $\star$-strategy of size at most $k$ in $G$.  To this end, we first
  make several observations on $G$. First, there is a winning strategy
  in $G$ since all predecessors of $bad$ are Player $A$ states that
  have other successors that $\text{bad}$. Second, Player $A$ can
  never avoid the states of the form $X_i$ nor $C_j$, i.e. for all
  strategy $\sigma$: $\XX\cup \CC
  \subseteq\Reach{G_\sigma,\text{init}^A}$. This entails that, in all
  \emph{winning} $\star$-strategy $\hat{\sigma}$, we must have
  $\hat{\sigma}(v)\neq\star$ for all $v\in \XX\cup \CC$. Otherwise, if
  $\hat{\sigma}(v)=\star$ for some $v\in \XX\cup \CC$, there is at
  least one concretisation $\sigma$ of $\hat{\sigma}$
  s.t. $\sigma(v)=\text{bad}$, and thus $\text{bad}$ is reachable in
  $G_\sigma$ by the path $\text{init}^A, \text{init}^B,v,\text{bad}$,
  which contradicts the fact that $\hat{\sigma}$ is winning. Finally,
  consider a winning $\star$-strategy $\hat{\sigma}$. We have shown
  above that $\hat{\sigma}(X_i)\not\in\{\star,\text{bad}\}$ for all
  $1\leq i\leq m$. This implies that $\hat{\sigma}$ maps each $X_i$
  either to $x_i^B$, or to $\lnot x_i^B$, and those nodes cannot be
  avoided by Player $A$, whatever concretisation of $\hat{\sigma}$ he
  plays. Using the same arguments as above, we conclude that
  $\hat{\sigma}(v)$ must be different from $\star$ for exactly $m$
  states in $\LL^A$. More precisely, for all winning
  $\star$-strategies $\hat{\sigma}$, let $S_{\hat{\sigma}}=\{x_i^A\mid
  \exists 1\leq i\leq m: \hat{\sigma}(X_i)=x_i^B\}\cup\{\lnot
  x_i^A\mid \exists 1\leq i\leq m: \hat{\sigma}(X_i)=\lnot
  x_i^B\}$. Then, for all winning $\star$-strategies $\hat{\sigma}$,
  for all $v\in S_{\hat{\sigma}}$: $\hat{\sigma}(v)\neq\star$. Observe
  that $|S_{\hat{\sigma}}|=m$ for all winning $\star$-strategies
  $\hat{\sigma}$. We conclude that all winning $\star$-strategies
  $\hat{\sigma}$ is of size at least $k=2\times m +n$ in $G$.

  Let us now conclude the proof. First consider a satisfying
  assignment $a:X\mapsto \{\mathbf{true},\mathbf{false}\}$ for
  $\phi$. Let $\hat{\sigma}$ be the $\star$-strategy defined as
  follows. For all $1\leq i\leq m$: $\hat{\sigma}(X_i)=x^B_i$ if
  $a(x_i)=\mathbf{true}$ and $\hat{\sigma}(X_i)=\lnot x^B_i$
  otherwise. For all $1\leq i\leq n$: pick a literal $\ell_i$ from
  $C_i$ which is true under the assignment $a$ (such a literal must
  exist since $a$ makes $\phi$ true), and let
  $\hat{\sigma}(C_i)=\ell_i^B$. For all $v\in \{x_1^A, \lnot
  x_1^A,\ldots, x_n^A, \lnot x_n^A\}$, let
  $\hat{\sigma}(v)=\text{init}^B$ if $v=\hat{\sigma}(X_i)$ for some
  $1\leq i\leq n$, and let $\hat{\sigma}(v)=\star$ otherwise. Finally,
  let $\hat{\sigma}(\text{init}^A)=\star$. It is easy to check that
  $\hat{\sigma}$ is indeed a \emph{winning} $\star$-strategy of size
  exactly $k$.

  Second, we assume a  winning $\star$-strategy $\overline{\sigma}$
  of size $\leq k$. By the above arguments, $\overline{\sigma}$ is of
  size exactly $k$ and $\overline{\sigma}(v)\neq\star$ for all $v\in
  \XX\cup \CC\cup S_{\overline{\sigma}}$. Let us consider the assignment
  of the variables of $\phi$ that maps each variable $x_i$ to
  \textbf{true} iff $\overline{\sigma}(X_i)=x^B_i$. To show that this
  assignment satisfies $\phi$ it is sufficient to show that
  $\overline{\sigma}(\CC) \subseteq \overline{\sigma}(\XX)$, because this
  entails, by definition of $G$, that, under this assignment, each
  clause $j$ contains at least one true literal (the one corresponding to
  $\overline{\sigma}(C_j)$).  To establish that $\overline{\sigma}(\CC)
  \subseteq \overline{\sigma}(\XX)$, we proceed by contradiction. If it
  is not the case, let $v\in \CC$ be a state
  s.t. $\overline{\sigma}(v)\not\in\overline{\sigma}(\XX)$. Assume
  $\overline{\sigma}(v)=\sim x^B_k$ for some $k$, and where $\sim$ can
  be $\lnot$ or nothing. Then, by definition of $G$, the corresponding
  $A$ state $\sim x^A_k$ is reachable in $G_\sigma$ for all
  concretisation $\sigma$ of $\overline{\sigma}$. Moreover, $\sim
  x^A_k\not\in X\cup C\cup S_{\overline{\sigma}}$, hence
  $\overline{\sigma}(\sim x^A_k)=\star$, and there is at least one
  concretisation of $\overline{\sigma}$ that maps $\sim x^A_k$ to
  $\text{bad}$. Since $\sim x^A_k$ is reachable in this concretisation
  in particular, we conclude that $\overline{\sigma}$ is not
  winning. Contradiction. \qed
\end{proof}

\begin{figure}[t]
\begin{center}
\scalebox{0.6}{
\begin{tikzpicture}[->,>=stealth',shorten >=1pt,auto,node distance=2cm,
                    semithick]

  \tikzstyle{every state}=[text=black]

  \node[state, rectangle, fill=white] (A)  {init$^B$};
  \node[state, fill=white] (A0) [above of=A, node distance=2cm] {init$^A$};
  \node[state, fill=white] (X1) [right of=A, node distance=4cm, yshift=3cm] {$X_1$};
  \node[state, fill=white] (X2) [right of=A, node distance=8cm, yshift=3cm] {$X_2$};
  \node[state] (X3) [right of=A, node distance=12cm, yshift=3cm] {$X_3$};
  
  \node[state, rectangle, fill=gray!30] (bad) [above of=X2, node distance=3cm, xshift=0cm] {bad};
    \node[state, rectangle, fill=gray!30] (bad') [above of=X2, node distance=3cm, yshift=-10cm,xshift=-2cm] {bad};

  \node[state, rectangle, fill=white] (x2) [below of=X2, node distance=1.5cm, xshift=-1cm] {$x^B_2$}; 
  \node[state, rectangle, fill=white] (nx2) [below of=X2, node distance=1.5cm, xshift=1cm] {$\lnot x^B_2$};
    \node[state, rectangle, fill=white] (x2') [below of=X2, node distance=4.5cm, xshift=-1cm] {$x^B_2$}; 
  \node[state, rectangle, fill=white] (nx2') [below of=X2, node distance=4.5cm, xshift=1cm] {$\lnot x^B_2$};
   \node[state, fill=white] (x2A) [below of=x2, node distance=1.5cm, xshift=-1cm] {$x^A_2$}; 
  \node[state, fill=white] (nx2A) [below of=nx2, node distance=1.5cm, xshift=-1cm] {$\lnot x^A_2$};
  \node[state, rectangle, fill=white] (x1) [below of=X1, node distance=1.5cm, xshift=-1cm] {$x^B_1$}; 
  \node[state, rectangle, fill=white] (nx1) [below of=X1, node distance=1.5cm, xshift=1cm] {$\lnot x^B_1$};
    \node[state, rectangle, fill=white] (x1') [below of=X1, node distance=4.5cm, xshift=-1cm] {$x^B_1$}; 
  \node[state, rectangle, fill=white] (nx1') [below of=X1, node distance=4.5cm, xshift=1cm] {$\lnot x^B_1$};
   \node[state, fill=white] (x1A) [below of=x1, node distance=1.5cm, xshift=-1cm] {$x^A_1$}; 
  \node[state, fill=white] (nx1A) [below of=nx1, node distance=1.5cm, xshift=-1cm] {$\lnot x^A_1$};
    \node[state, rectangle, fill=white] (x3) [below of=X3, node distance=1.5cm, xshift=-1cm] {$x^B_3$}; 
  \node[state, rectangle, fill=white] (nx3) [below of=X3, node distance=1.5cm, xshift=1cm] {$\lnot x^B_3$};    \node[state, rectangle, fill=white] (x3') [below of=X3, node distance=4.5cm, xshift=-1cm] {$x^B_3$}; 
  \node[state, rectangle, fill=white] (nx3') [below of=X3, node distance=4.5cm, xshift=1cm] {$\lnot x^B_3$};
   \node[state, fill=white] (x3A) [below of=x3, node distance=1.5cm, xshift=-1cm] {$x^A_3$}; 
  \node[state, fill=white] (nx3A) [below of=nx3, node distance=1.5cm, xshift=-1cm] {$\lnot x^A_3$};

  \node[state, fill=white] (C1) [right of=A, node distance=4cm, yshift=-4cm] {$C_1$};
  \node[state] (C2) [right of=A, node distance=8cm, yshift=-4cm] {$C_2$};

 \path (A) edge[bend left] (X1)
        (A) edge[bend left=60]  (X2)
        (A) edge[bend left=65]  (X3)
        (A) edge[bend right] (C1)
        (x2) edge (x2A)
        (x2') edge (x2A)
        (x1) edge (x1A)
        (x1') edge (x1A)
        (x3) edge (x3A)
        (x3') edge (x3A)
        (nx2') edge (nx2A)
        (nx1) edge (nx1A)
        (nx1') edge (nx1A)
        (nx3) edge (nx3A)
        (nx3') edge (nx3A)
        (0,3.5) edge (A0)
        (nx2) edge (nx2A)
        (A) edge [bend right=55](C2);
     \path    (X1) [dashed] edge (bad)
        (A0) edge (A)
     (X2) edge (bad)
     (X3) edge[bend right=15] (bad)
     (x2A) edge[bend left=15] (bad)
     (nx2A) edge[bend right=15] (bad)
     (x2A) edge[bend right=22] (A)
     (nx2A) edge[bend right=25] (A)
     (x1A) edge[bend left=-10] (bad)
     (nx1A) edge[bend right=25] (bad)
     (x1A) edge[bend right=15] (A)
     (nx1A) edge[bend right=25] (A)
     (x3A) edge[bend left=-15] (bad)
     (nx3A) edge[bend right=25] (bad)
     (x3A) edge[bend right=-17] (A)
     (nx3A) edge[bend right=-16] (A)
     (X2) edge (x2)
     (X2) edge (nx2)
     (X1) edge (x1)
     (X1) edge (nx1)
     (X3) edge (x3)
     (X3) edge (nx3)
     (C1) edge (x2')
     (C2) edge (x2')
     (C1) edge (x1')
     (C1) edge (nx3')
     (C2) edge (nx1')
     (C2) edge (x3')
        (C1) edge[bend left=-15]  (bad')
        (C2) edge[bend right=-15]  (bad');
\end{tikzpicture}}
\vspace{-.5cm}
\caption{Construction for the formula $\varphi=(x_1\lor x_2 \lor \lnot
  x_3)\wedge (\lnot x_1\lor x_2 \lor x_3) $: $k = 8$. State $bad$, and
  all the states of the form $(\neg)x^B_i$ have been duplicated for
  readability.  }\label{fig:redex}
\end{center}
  \vspace{-.7cm}
\end{figure}
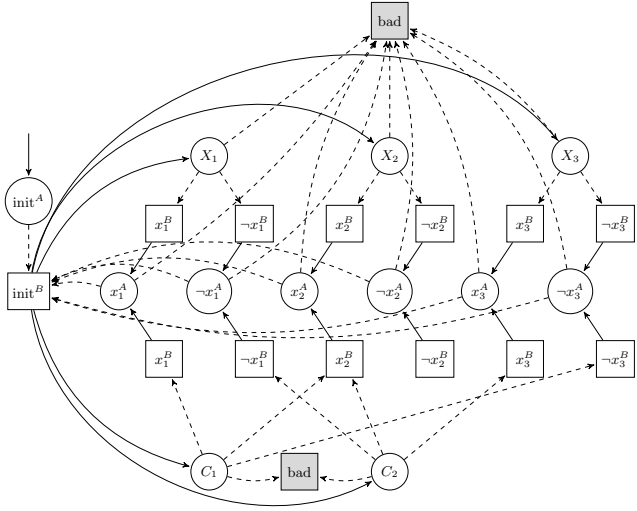

Thus, unless P=NP, there is no polynomial-time algorithm to compute a
winning $\star$-strategy \emph{of minimal size}.
In most practical cases we are aware of, the situation is even worse,
since the arena is not given explicitly. This is the case with the
three problems we consider as applications (see
Section~\ref{sec:applications}), because they can be reduced to a
safety games whose size is at least \emph{exponential} in the size of
the original problem instance.

\section{Structured games and monotonic strategies\label{sec:structured-games}}

To mitigate the strong complexity identified in the previous section,
we propose to follow the successful \emph{antichain approach}
\cite{WDMR-tacas08,DR10,FJR-fmsd11}. In this line of research, the
authors point out the fact that, in practical applications (like those
we identify in Section~\ref{sec:applications}), system states exhibit
some inherent \emph{structure}, which is formalised by a
\emph{simulation relation} and can be exploited to improve the
practical running time of the algorithms. In the present paper, we
rely on the notion of \emph{turn-based alternating simulation}, to
define heuristics to
\begin{inparaenum}[(i)]
\item improve the running time of the algorithms to solve safety games
  and
\item obtain succinct representations of strategies.
\end{inparaenum}
This notion is a straightforward adaptation, to turn-based games, of
the alternating simulations from \cite{AHKV-concur98}.

\paragraph{Turn-based alternating simulations}
Let $G=(V_A,V_B,E,I,\bad)$ be a finite safety game. A partial order
$\wbs\subseteq V_A\times V_A \cup V_B\times V_B$ is a \emph{turn-based
  alternating simulation relation for $G$} \cite{AHKV-concur98}
(tba-simulation for short) iff for all $v_1$, $v_2$ s.t. $v_1\wbs
v_2$, either $v_1\in \bad$ or the three following conditions hold:
\begin{inparaenum}[(i)]
\item If $v_1\in V_A$, then, for all
  $v_1'\in\Succ{v_1}$, there is $v_2'\in\Succ{v_2}$ s.t.  $v'_1\wbs
  v'_2$; 
\item If $v_1\in V_B$, then, for all $v_2'\in\Succ{v_2}$, there is
  $v'_1\in\Succ{v_1}$ s.t. $v'_1\wbs v'_2$; and
\item $v_2 \in \bad$ implies $v_1 \in \bad$.
\end{inparaenum}

On the running example (Fig.~\ref{fig:substrgame}), $\wbs_0$ is a
tba-simulation relation. Indeed, as we are going to see in
Section~\ref{sec:applications}, a simulation relation in a game where
player $A$ has always the opportunity to perform the same moves is
necessarily alternating.

\paragraph{Monotonic concretisations of $\star$-strategies} Let us
exploit the notion of tba-simulation to introduce a finer notion of
concretisation of $\star$-strategies.  Let $\hat{\sigma}$ be a
$\star$-strategy. Then, a strategy $\sigma$ is a
\emph{$\wbs$-concretisation} of $\hat{\sigma}$ iff for all $v\in V_A$:
\begin{align*}
  v\in\cdom{\hat{\sigma}} &\implies \sigma(v)=\hat{\sigma}(v)\\
  &\textrm{and}\\
  \left(v\not\in \cdom{\hat{\sigma}} \land v\in \dco{\cdom{\hat{\sigma}}}{\wbs}\right) &\implies
  \exists \overline{v}\in \cdom{\hat{\sigma}}: \overline{v}\wbs v\land\sigma(\overline{v})\wbs  \sigma(v)
\end{align*}
Intuitively, when $\hat{\sigma}(v)=\star$, but there is $v'\wbs v$
s.t. $\hat{\sigma}(v')\neq\star$, then, $\sigma(v)$ must mimic the
strategy $\sigma(\overline{v})$ from some state $\overline{v}$ that
covers $v$ and s.t. $\hat{\sigma}(\overline{v})\neq\star$. Then, we
say that a $\star$-strategy is \emph{$\wbs$-winning} if all its
$\wbs$-concretisations are winning.

Because equality is a tba-simulation, the proof of
Theorem~\ref{the:mss-np-complete} can be used to show that computing a
$\wbs$-winning $\star$-strategy of size less than $k$ is an
NP-complete problem too. Nevertheless, $\wbs$-winning $\star$-
strategies can be even more compact than winning $\star$-strategy. For
instance, on the running example, the smallest winning
$\star$-strategy $\overline{\sigma}$ is of size~$5$: it is given in
Fig.~\ref{fig:strategies}~(b) and highlighted by bold arrows in
Fig.~\ref{fig:substrgame} (thus,
$\overline{\sigma}(\Agraystate{4})=\overline{\sigma}(\badseven)=\star$). Yet,
one can define a $\wbs_0$-winning $\star$-strategy $\hat{\sigma}$ of
size~$2$ because states $\Astate{5}$ and $\Astate{6}$ simulate all the
winning states of $A$. This $\star$-strategy\footnote{Actually, this
  strategy is winning for all initial number $n$ of balls s.t. $n\mod
  3\neq 1$.} $\hat{\sigma}$ is the one given in
Fig.~\ref{fig:strategies}~(c) and represented by the boldest arrows in
Fig.~\ref{fig:substrgame}. Observe that, while all
$\wbs$-concretisations of $\hat{\sigma}$ are winning, it is not the
case of all \emph{concretisations} of $\hat{\sigma}$. For instance,
there is one concretisation $\sigma$ of $\hat{\sigma}$
s.t. $\sigma(\Astate{0})=\Bgraystate{2}$, but $\sigma$ is not a
$\wbs_0$-monotonic concretisation of $\hat{\sigma}$ (and is losing).

\paragraph{Obtaining $\wbs$-winning $\star$-strategies} The previous
example clearly shows the kind of $\wbs$-winning $\star$-strategies we
want to achieve: $\star$-strategies $\hat{\sigma}$
s.t. $\cdom{\hat{\sigma}}$ is the maximal antichain of the winning
states. In Section~\ref{algo:otfur_plusplus}, we introduce an
efficient on-the-fly algorithm to compute such a $\star$-strategy. Its
correctness is based on the fact that we can extract a $\wbs$-winning
$\star$-strategy from any winning (plain) strategy, as shown by
Proposition~\ref{th-monog} hereunder. For all strategy $\sigma$, and
all $V\subseteq V_A$, we let $\sigma|_V$ denote the $\star$-strategy
$\hat{\sigma}$ s.t. $\hat{\sigma}(v)=\sigma(v)$ for all $v\in V$ and
$\hat{\sigma}(v)=\star$ for all $v\not\in V$. Then:
\begin{proposition}\label{th-monog}
  Let $G=(V_A,V_B,E,I,\bad)$ be a finite turn-based safety game and
  $\wbs$ be a tba-simulation relation for $G$. Let $\sigma$ be a
  strategy in $G$, and let $\SS\subseteq V_A$ be a set of $A$-states
  s.t.:
  \begin{inparaenum}[(i)]
  \item \label{item:hyp1} $(\SS\cup\sigma(\SS))\cap \bad = \emptyset$;
  \item \label{item:hyp2} $I \in \dco{\SS}{\wbs}$; and
  \item \label{item:hyp3} $\mathsf{succ}(\sigma(\SS))\subseteq \dco{\SS}{\wbs}$.
  \end{inparaenum}
  Then, $\sigma|_\SS$ is a $\wbs$-winning $\star$-strategy.
\end{proposition}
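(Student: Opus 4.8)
The plan is to fix an arbitrary $\wbs$-concretisation $\sigma'$ of $\hat\sigma=\sigma|_\SS$ (so that $\cdom{\hat\sigma}=\SS$) and to prove that $\sigma'$ is winning, i.e.\ $\Reach{G_{\sigma'},I}\cap\bad=\emptyset$. Every state reachable from $I$ in $G_{\sigma'}$ is the endpoint of a finite path $I=v_0,v_1,\dots,v_\ell$ of $G_{\sigma'}$, and since the arena is bipartite with $I\in V_A$, such a path alternates between $V_A$ and $V_B$. Hence I would prove, by induction on $\ell$, the following invariant on the endpoint $v_\ell$:
\begin{inparaenum}[(a)]
\item if $v_\ell\in V_A$ then $v_\ell\in\dco{\SS}{\wbs}$; and
\item if $v_\ell\in V_B$ then there is $s\in\SS$ with $\sigma(s)\wbs v_\ell$.
\end{inparaenum}
Once this holds for all paths, I will read off from (a)--(b) that no reachable state is bad.

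For the induction, the base case $\ell=0$ is exactly hypothesis~(\ref{item:hyp2}): $I\in\dco{\SS}{\wbs}$, establishing~(a). For a step from an $A$-state $v_\ell$ satisfying~(a) to $v_{\ell+1}=\sigma'(v_\ell)$, I split on whether $v_\ell\in\SS$: if so, $\sigma'(v_\ell)=\sigma(v_\ell)$ and reflexivity gives $\sigma(v_\ell)\wbs v_{\ell+1}$, so~(b) holds with witness $v_\ell$; otherwise $v_\ell\in\dco{\SS}{\wbs}\setminus\SS$ and the definition of $\wbs$-concretisation supplies $\overline v\in\SS$ with $\overline v\wbs v_\ell$ and $\sigma(\overline v)=\sigma'(\overline v)\wbs\sigma'(v_\ell)=v_{\ell+1}$, which is~(b) with witness $\overline v$. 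For a step from a $B$-state $v_\ell$ satisfying~(b) with witness $s$ to an arbitrary Player-$B$ successor $v_{\ell+1}\in\Succ{v_\ell}$, I use that $\sigma(s)\in\sigma(\SS)$ is non-bad by~(\ref{item:hyp1}); thus the tba-simulation clause governing $B$-states applies to $\sigma(s)\wbs v_\ell$ and yields $w\in\Succ{\sigma(s)}$ with $w\wbs v_{\ell+1}$. By~(\ref{item:hyp3}), $w\in\Succ{\sigma(s)}\subseteq\Succ{\sigma(\SS)}\subseteq\dco{\SS}{\wbs}$, so some $s''\in\SS$ satisfies $s''\wbs w$, and transitivity gives $s''\wbs v_{\ell+1}$, establishing~(a).

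That no reachable state is bad then follows from hypothesis~(\ref{item:hyp1}) and the badness-monotonicity clause of the tba-simulation (``$v_2\in\bad$ implies $v_1\in\bad$''): if~(a) holds with witness $s\in\SS$ then $s\notin\bad$, and $s\wbs v_\ell$ forces $v_\ell\notin\bad$; if~(b) holds, its witness $\sigma(s)$ is non-bad by~(\ref{item:hyp1}), and $\sigma(s)\wbs v_\ell$ again forces $v_\ell\notin\bad$. As every reachable state is a path endpoint, this gives $\Reach{G_{\sigma'},I}\cap\bad=\emptyset$. I expect the main obstacle to be pinpointing the correct two-part invariant---in particular, phrasing the $B$-state clause through a dominating move $\sigma(s)$ rather than through closure membership---and checking that the two halves of the alternating simulation interlock over a full round: the concretisation clause (which internalises the $V_A$-clause of $\wbs$) pushes the invariant across $\sigma'$-moves, while the $V_B$-clause of $\wbs$ together with hypothesis~(\ref{item:hyp3}) pushes it across Player~$B$'s moves. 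The remaining verifications (reflexivity, transitivity, the successor inclusions) are routine.
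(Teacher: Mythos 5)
Your proof is correct and follows essentially the same route as the paper's: fix an arbitrary $\wbs$-concretisation, propagate a closure invariant along paths of $G_{\sigma'}$ by induction (using the concretisation clause on Player-$A$ moves and the $V_B$-clause of the tba-simulation together with hypothesis~(iii) on Player-$B$ moves), and conclude via hypothesis~(i) and the badness clause of the tba-simulation. The only cosmetic difference is that you carry a single two-part invariant covering both $A$- and $B$-states, whereas the paper first proves $\Reach{G_\tau}\cap V_A\subseteq\dc{\SS}$ and then derives the $B$-state containment in a second pass; your handling of the ``either $v_1\in\bad$ or \dots'' guard is, if anything, slightly more explicit than the paper's.
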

\begin{proof}
  Let $\tau$ be a $\wbs$-concretisation of $\sigma|_{\SS}$ and let us
  show that $\tau$ is winning. Let us first show that all $A$-states
  reachable in $G$ under strategy $\tau$ are covered by some state in
  $\SS$, i.e. that $\Reach{G_{\tau}}\cap V_A \subseteq \dc{\SS}$. Let
  us consider $v\in \Reach{G_{\tau}}\cap V_A$, and let $v^A_0,v^B_1,
  v^A_1,\ldots, v^B_n, v^A_n$ be a path in in $G_\tau$ that reaches
  $v$, i.e., with $v^A_0=I$ and $v^A_n=v$. Let us prove, by
  induction on $i$ that $v^A_i\in\dc{\SS}$ for all $0\leq i\leq n$.

  \noindent{\bf Base case $i=0$}: trivial by
  hypothesis~(\ref{item:hyp2}).

  \noindent{\bf Inductive case $i=k>0$}: let us assume that
  $v^A_{k-1}\in\dc{\SS}$ and let us show that
  $v^A_k\in\dc{\SS}$. Since $(v^A_{k-1}, v^B_k)$ is an edge in
  $G_\tau$, $v^B_k=\tau(v^A_{k-1})$. Since $\tau$ is a
  $\wbs$-concretisation of $\sigma|_\SS$, there is $\overline{v}\in
  \SS$ s.t. $\overline{v}\wbs v^A_{k-1}$ and
  $\sigma(\overline{v})=\tau(\overline{v})\wbs \tau(v^A_{k-1})=
  v^B_k$. Thus, $\sigma(\overline{v})\wbs v^B_k$. Since $\wbs$ is a
  tba-simulation, the successor $v^A_k$ of $v^B_k$ can be simulated by
  some successor $\hat{v}$ of $\sigma(\overline{v})$,
  i.e. $\hat{v}\wbs v^A_k$. By hypothesis (\ref{item:hyp3}),
  $\hat{v}\in\dco{\SS}{\wbs}$. Hence $v^A_k\in\dco{\SS}{\wbs}$ too.
  \smallskip
  
  Next, let us expand that result by showing that all states reachable
  in $G_\tau$ are covered by some state in $\SS\cup\tau(\SS)$,
  i.e. that $\Reach{G_{\tau}} \subseteq \dc{\SS\cup\tau(\SS)}$. To do
  so, it is sufficient to show that each $v_B\in\Reach{G_\tau}\cap
  V_B\in \dc{\tau(\SS)}$. Since $v_B\in\Reach{G_\tau}$, there is
  $v_A\in\Reach{G_\tau}\cap V_A$ s.t. $(v_A,v_B)\in E$. Hence, $v_A\in
  \dc{\SS}$ by the arguments above. Thus, since $\tau$ is a
  $\wbs$-concretisation of $\sigma|_\SS$, there is $\overline{v}\in
  \SS$ s.t. $\overline{v}\wbs v_A$ and
  $\tau(v_A)\wbs\tau(v_B)$. Hence, $v_B\in \dco{\tau(\SS)}{\wbs}$.

  We conclude the proof by observing that that $\tau(\SS) =
  \sigma(\SS)$, since $\tau$ is a $\wbs$-concretisation of
  $\sigma|_\SS$. Hence, $\dc{\SS\cup\tau(\SS)}=
  \dc{\SS\cup\sigma(\SS)}$. Moreover, hypothesis~(\ref{item:hyp1})
  implies that $\dc{\SS\cup\sigma(\SS)}\cap \bad=\emptyset$, by
  definition of tba-simulation (compatible with $\bad$). Hence, since
  $\Reach{G_{\tau}} \subseteq
  \dc{\SS\cup\tau(\SS)}=\dc{\SS\cup\sigma(\SS)}$ by the arguments
  above, we conclude that $\Reach{G_{\tau}}\cap\bad=\emptyset$, and
  thus, that $\tau$ is winning.\qed
\end{proof}

This proposition allows us to identify families of sets of states on
which $\star$-strategies can be defined. One of the sets that
satisfies the conditions of Proposition~\ref{th-monog} is the maximal
antichain of reachable $A$-states, for a given winning strategy
$\sigma$:

\begin{theorem}\label{the:can-play-on-max-antichain}
  Let $G=(V_A,V_B,E,I,\bad)$ be a finite turn-based safety game,
  $\wbs$ be a tba-simulation relation for $G$.  Let $\sigma$ be a
  winning strategy and $\WR{\sigma}$ be a maximal $\wbs$-antichain on
  $\mathsf{Reach}(G_{\sigma})\cap V_A$, then the $\star$-strategy
  $\sigma|_{\WR{\sigma}}$ is $\wbs$-winning.
\end{theorem}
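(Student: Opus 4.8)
The plan is to reduce Theorem~\ref{the:can-play-on-max-antichain} to Proposition~\ref{th-monog} by taking $\SS = \WR{\sigma}$ and verifying that the three hypotheses~(\ref{item:hyp1}),~(\ref{item:hyp2}),~(\ref{item:hyp3}) hold. Since $\sigma$ is a winning strategy, no bad state is reachable in $G_\sigma$ from $I$, so in particular $\Reach{G_\sigma}\cap\bad=\emptyset$. As $\WR{\sigma}\subseteq\Reach{G_\sigma}\cap V_A$ and $\sigma(\WR{\sigma})\subseteq\Reach{G_\sigma}\cap V_B$ (every state in $\WR{\sigma}$ is reachable and $A$ plays $\sigma$ there), both sets are disjoint from $\bad$, giving hypothesis~(\ref{item:hyp1}). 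The bulk of the argument lies in hypotheses~(\ref{item:hyp2}) and~(\ref{item:hyp3}), which both amount to showing that $\WR{\sigma}$, being a \emph{maximal} antichain, downward-covers the relevant reachable $A$-states.

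For hypothesis~(\ref{item:hyp2}), I would argue that $I\in\Reach{G_\sigma}\cap V_A$ trivially (the initial state reaches itself), and that $\WR{\sigma}$ is by definition a maximal antichain of $\Reach{G_\sigma}\cap V_A$. By the property of maximal antichains recalled in the preliminaries, $\dc{\WR{\sigma}}=\dc{\Reach{G_\sigma}\cap V_A}$, so every reachable $A$-state---$I$ in particular---lies in $\dco{\WR{\sigma}}{\wbs}$. For hypothesis~(\ref{item:hyp3}), the key observation is that $\mathsf{succ}(\sigma(\WR{\sigma}))$ consists of $A$-states that are themselves reachable in $G_\sigma$: if $\overline{v}\in\WR{\sigma}$ then $\overline{v}$ is reachable, $\sigma(\overline{v})$ is the move chosen by $A$ hence reachable, and any successor of $\sigma(\overline{v})$ (a $B$-state with all its moves present in $G_\sigma$) is reachable too. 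Thus $\mathsf{succ}(\sigma(\WR{\sigma}))\subseteq\Reach{G_\sigma}\cap V_A$, and again by maximality of the antichain this set is contained in $\dco{\WR{\sigma}}{\wbs}$.

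Once the three hypotheses are established, Proposition~\ref{th-monog} applies directly and yields that $\sigma|_{\WR{\sigma}}$ is $\wbs$-winning, which is exactly the claim.

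The main obstacle, and the step to carry out carefully, is the reachability bookkeeping in hypothesis~(\ref{item:hyp3}): one must be precise that successors of $B$-states in $G_\sigma$ remain reachable (because $E_\sigma$ restricts only $A$'s moves, never $B$'s), and that therefore the set $\mathsf{succ}(\sigma(\WR{\sigma}))$ stays inside $\Reach{G_\sigma}\cap V_A$ so that the maximal-antichain covering property of $\WR{\sigma}$ can be invoked. Everything else is a routine instantiation, leaning on the already-proved downward-closure identity $\dc{\alpha}=\dc{S'}$ for a maximal antichain $\alpha$ of $S'$.
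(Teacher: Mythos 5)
Your proposal is correct and follows essentially the same route as the paper: instantiate Proposition~\ref{th-monog} with $\SS=\WR{\sigma}$ and check the three hypotheses via reachability in $G_\sigma$ plus the maximal-antichain identity $\dc{\WR{\sigma}}=\dc{\Reach{G_\sigma}\cap V_A}$. Your treatment of hypothesis~(iii) is in fact slightly more careful than the paper's, which glosses over the restriction to $A$-states when invoking the downward-closure property.
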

\begin{proof}
  It is straightforward to verify that $\sigma$ and $\WR{\sigma}$ satisfy the
  three properties of Proposition~\ref{th-monog}. Indeed:
  \begin{inparaenum}[(i)]
  \item $\WR{\sigma} \cup \sigma(\WR{\sigma}) \subseteq
    \mathsf{Reach}(G_{\sigma})$ by definition of $\WR{\sigma}$ and
    $\Reach{G_{\sigma}}\cap\bad =\emptyset$ because $\sigma$ is
    winning. Hence $(\WR{\sigma} \cup \sigma(\WR{\sigma}))
    \cap\bad=\emptyset$.
  \item $I \in \Reach{G_{\sigma}}\cap V_A$ and $\Reach{G_{\sigma}}\cap
    V_A\subseteq \dc{\WR{\sigma}}$ by definition, hence $I\in
    \dc{\WR{\sigma}}$.
  \item $\Succ\sigma(\WR{\sigma}))\subseteq \Reach{G_{\sigma}}$ and
    $\Reach{G_{\sigma}}\subseteq \dc{\WR{\sigma}}$ by definition,
    hence $\Succ{\sigma(\WR{\sigma})}\subseteq \dc{\WR{\sigma}}$. \qed
  \end{inparaenum}
\end{proof}

\section{Efficient computation of succinct winning strategies\label{sec:effic-comp-succ}}

\paragraph{The original OTFUR algorithm} 
The OTFUR algorithm~\cite{CDFLL-concur05} is an efficient, on-the-fly
algorithm to compute a winning strategy in a finite
\emph{reachability} game (it is thus easy to adapt it to solve safety
games). We sketch the main ideas behind this algorithm, and refer the
reader to~\cite{CDFLL-concur05} for a comprehensive
description\footnote{See Appendix~\ref{sec:orig-otfur-algor} for the
  original algorithm.}. The intuition of the approach is to combine a
forward exploration from the initial state with a backward propagation
of the information when a losing state is found.  During the forward
exploration, newly discovered states are assumed winning until they
are declared losing for sure. Whenever a losing state is identified
(either because it is $\bad$, or because $\bad$ is unavoidable from
it), the information is back propagated to predecessors whose status
could be affected by this information. A bookkeeping function ${\sf
  Depend}$ is used for that purpose: it associates, to each state $v$,
a list ${\sf Depend}(v)$ of edges that need to be re-evaluated should
$v$ be declared losing.  The main interest of this algorithm is that
it works \emph{on-the-fly} (thus, the arena does not need to be fully
constructed before the analysis), and avoids, if possible, the entire
traversal of the arena.  In this section, we propose an optimized
version of OTFUR for games equipped with tba-simulations.
Before this, we prove that, when a safety game is equipped with a
tba-simulation $\wbs$, then its set of winning states is
\emph{$\wbs$-downward closed}. This property will be important for the
correctness of our algorithm.

We start by an auxiliary lemma that relates tba-simulations and
computation of the attractor set:

\begin{lemma}\label{lem-li}
  Let $G$ be a finite safety game, and let $\wbs$ be an tba-simulation
  for $G$. Then, for all $(v_1,v_2)\in\wbs$ and for all $i \in
  \mathbb{N}$: $v_2 \in \Attr_i$ implies $v_1 \in \Attr_i$.
\end{lemma}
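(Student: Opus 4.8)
The plan is to proceed by induction on $i$, exploiting the fact that the two movement clauses of the tba-simulation are oriented \emph{exactly} to match the two cases in the definition of the attractor: clause (i) (the universal condition on $A$-states) matches the rule $\{v\in V_A\mid \Succ{v}\subseteq\Attr_i\}$, while clause (ii) (the existential condition on $B$-states) matches the rule $\{v\in V_B\mid \Succ{v}\cap\Attr_i\neq\emptyset\}$. Once this correspondence is noticed, the induction is essentially forced.

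For the base case $i=0$ we have $\Attr_0=\bad$, so $v_2\in\bad$; the definition of tba-simulation applied to $v_1\wbs v_2$ gives either $v_1\in\bad$ directly, or, through clause (iii), the implication $v_2\in\bad\implies v_1\in\bad$. In both cases $v_1\in\bad=\Attr_0$. For the inductive step, assume the claim for $i$ and take $v_1\wbs v_2$ with $v_2\in\Attr_{i+1}$. First dispose of the easy cases: if $v_1\in\bad$ then $v_1\in\Attr_0\subseteq\Attr_{i+1}$, and if already $v_2\in\Attr_i$ the induction hypothesis gives $v_1\in\Attr_i\subseteq\Attr_{i+1}$. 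So assume $v_1\notin\bad$ (which makes clauses (i)--(iii) available) and that $v_2$ was \emph{newly} added at step $i+1$. Since $\wbs$ relates only states of the same player, $v_1$ and $v_2$ lie in the same partition, and we split on it. If $v_2\in V_B$, pick $v_2'\in\Succ{v_2}\cap\Attr_i$; clause (ii) yields some $v_1'\in\Succ{v_1}$ with $v_1'\wbs v_2'$, and the induction hypothesis gives $v_1'\in\Attr_i$, so $\Succ{v_1}\cap\Attr_i\neq\emptyset$ and $v_1\in\Attr_{i+1}$. If $v_2\in V_A$, then $\Succ{v_2}\subseteq\Attr_i$; clause (i) lets us dominate every $v_1'\in\Succ{v_1}$ by some $v_2'\in\Succ{v_2}\subseteq\Attr_i$, whence $v_1'\in\Attr_i$ by induction, so $\Succ{v_1}\subseteq\Attr_i$ and again $v_1\in\Attr_{i+1}$.

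The argument is conceptually short, and the only delicate points are bookkeeping rather than mathematical: one must handle the case $v_1\in\bad$ separately (there the simulation clauses are vacuous, but $v_1$ is already in every $\Attr_j$), and one must remember that $\wbs$ preserves player ownership so that the \emph{correct} clause of the simulation is invoked in each subcase. I expect no genuine obstacle; the single subtlety to state carefully is the direction of the quantifiers in clause (ii) for $B$-states, which provides a successor of $v_1$ matching a \emph{given} attracted successor of $v_2$, precisely what the existential attractor rule requires.
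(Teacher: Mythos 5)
Your proof is correct and follows essentially the same route as the paper's: induction on $i$, with the base case settled by clause (iii) and the inductive step split on the owner of $v_2$ so that the existential clause for $B$-states and the universal clause for $A$-states line up with the two attractor rules. You are in fact slightly more careful than the paper's own proof, which does not explicitly dispose of the cases $v_1\in\bad$ (where the movement clauses are vacuous) and $v_2\in\Attr_i$ already at the previous stage; both are handled correctly and harmlessly in your write-up.
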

\begin{proof}
  The proof is by induction on $i$.
  
  \noindent{\textbf{Base case: $i=0$}} By definition,
  $\Attr_0=\bad$. However $v_2\in\mathsf{Bad}$ and $v_1\wbs v_2$ imply
  $v_1\in\mathsf{Bad}=\Attr_0$, by definition of tba-simulations.

  \noindent{\textbf{Inductive case: $i>0$}} Assume $v_2 \in
  \Attr_i$. There are two cases to study.
  \begin{itemize}
  \item In the case where $v_2 \in V_B$: first of all $v_1\wbs v_2$
    implies $v_1 \in V_B$. Moreover by definition of $\Attr_i$, there
    exists $v'_2 \in V_A$ such that $(v_2,v'_2)\in E$ and $v'_2 \in
    \Attr_{i-1}$. Then by definition of the tba-simulation relation,
    there exists $v'_1 \in V_A$ such that $(v_1,v'_1)\in E$ and
    $v'_1\wbs v'_2$. By induction hypothesis, $v'_1\in \Attr_{i-1}$
    and hence by definition of $\Attr_i$, $v_1 \in \Attr_i$.
  \item In the case where $v_2 \in V_A$: first of all $v_1\wbs v_2$
    implies $v_1 \in V_A$. Moreover by definition of $\Attr_i$, for
    all $v'_2 \in V_B$ such that $(v_2,v'_2)\in E$, $v'_2 \in
    \Attr_{i-1}$.  On the other hand by definition of the
    tba-simulation relation, for all $v'_1\in V_B$ such that
    $(v_1,v'_1)\in E$, then there exists $v''_2 \in V_B$ such that
    $(v_2,v''_2)\in E$ and $v'_1\wbs v''_2$. As a consequence $v''_2
    \in \Attr_{i-1}$, and thus by induction hypothesis $v'_1\in
    \Attr_{i-1}$. Hence by definition of $\Attr_i$, $v_1 \in
    \Attr_i$.\qed
  \end{itemize}
\end{proof}

Then:

\begin{proposition}\label{th-down}
  Let $G$ be a finite turn-based safety game, and let $\wbs$ be a
  tba-simulation for $G$. Then the set $\win$ of winning states in $G$
  is downward closed for $\wbs$.
\end{proposition}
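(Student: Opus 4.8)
The plan is to reduce the statement to the auxiliary Lemma~\ref{lem-li} via the characterisation of the winning set in terms of the attractor. Recall from the preliminaries that $\win = V \setminus \Attr_\bad$, where $\Attr_\bad$ is the stabilising limit of the increasing sequence $(\Attr_i)_{i\geq 0}$, so that $\Attr_\bad = \bigcup_{i\geq 0}\Attr_i$ and a state is losing (i.e.\ in $\Attr_\bad$) iff it belongs to some $\Attr_i$. To prove that $\win$ is downward closed, it suffices to establish the inclusion $\dco{\win}{\wbs}\subseteq\win$, since the reverse inclusion $\win\subseteq\dco{\win}{\wbs}$ holds trivially by reflexivity of the partial order $\wbs$ (every $v\in\win$ satisfies $v\wbs v$, hence $v\in\dco{\win}{\wbs}$).

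First I would unfold the definition of downward closure: a state $v_2$ lies in $\dco{\win}{\wbs}$ exactly when there is some $v_1\in\win$ with $v_1\wbs v_2$. The goal is therefore to show that $v_1\in\win$ and $v_1\wbs v_2$ together imply $v_2\in\win$. This is precisely the contrapositive of Lemma~\ref{lem-li} taken in the limit. Concretely, from $v_1\wbs v_2$ and Lemma~\ref{lem-li} I obtain, for every $i\in\mathbb{N}$, that $v_2\in\Attr_i$ implies $v_1\in\Attr_i$; taking the union over all $i$ (the sequence stabilises on $\Attr_\bad$) yields $v_2\in\Attr_\bad \implies v_1\in\Attr_\bad$. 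Contraposing, $v_1\notin\Attr_\bad$ forces $v_2\notin\Attr_\bad$. Since $v_1\in\win$ means $v_1\notin\Attr_\bad$, I conclude $v_2\notin\Attr_\bad$, i.e.\ $v_2\in\win$, as required.

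Combining the two inclusions gives $\win=\dco{\win}{\wbs}$, which is exactly the assertion that $\win$ is downward closed for $\wbs$. The argument is short because Lemma~\ref{lem-li} carries all the combinatorial work about the attractor computation and the tba-simulation conditions.

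I do not expect a genuine technical obstacle here; the only real pitfall is orientational. The downward-closure operator is defined as $\dco{S'}{\wbs}=\{s\mid \exists s'\in S', s'\wbs s\}$, so a \emph{winning} witness $v_1$ sits on the \emph{left} of the relation ($v_1\wbs v_2$), and the lemma must be invoked in its contrapositive form ($v_1\wbs v_2$ and $v_1\notin\Attr_\bad$ gives $v_2\notin\Attr_\bad$), not in its direct form. The care point is thus to propagate \emph{membership in the attractor downward along $\wbs$} (via the lemma) and therefore \emph{membership in $\win$ upward along $\wbs$ into the downward closure}, matching the unusual convention used for $\dco{\cdot}{\wbs}$ in this paper, and to justify the passage from the level-indexed attractors $\Attr_i$ to their limit $\Attr_\bad$.
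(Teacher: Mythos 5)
Your proof is correct and takes essentially the same route as the paper: both derive the result from Lemma~\ref{lem-li} together with the identity $\win = V\setminus \Attr_\bad$, using the contrapositive of the lemma at the level of $\Attr_\bad$. The only difference is that you spell out the passage from the indexed sets $\Attr_i$ to their limit and the orientation of the downward-closure convention, which the paper's two-line proof leaves implicit.
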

\begin{proof}[Proposition~\ref{th-down}]
  As a consequence of Lemma~\ref{lem-li}, for all $(v_1,v_2)\in\wbs$,
  $v_2 \in \Attr_\bad$ implies $v_1 \in \Attr_\bad$.  Since $\win$ is
  the complementary of $\Attr_\bad$, $v_1 \in \win$ implies $v_2 \in
  \win$, that is $\win$ is downwards closed for $\wbs$.\qed
\end{proof}

\begin{algorithm}\label{algonous}
    \caption{The {\sf OTFUR} optimized for
    games with a tba-simulation}\label{algo:otfur_plusplus} 
  \KwData{A safety game $G=(V_A,V_B,E,I,\bad)$}
  ${\sf Passed} := \{I\}$ ; ${\sf Depend}(I):= \varnothing$ \; 
  ${\sf AntiMaybe} := \{I\}$ ; ${\sf AntiLosing} := \{\}$ \;
  ${\sf Waiting} := \{(I,v')\mid v'\in\minac{\Succ{I}}\}$ \;

   \While{${\sf Waiting} \neq \varnothing \wedge I\notin \uparrow{\sf AntiLosing}$}{
    $e=(v, v') := {\it pop}({\sf Waiting})$ \;
    \If{$v\notin \uparrow{\sf AntiLosing}$}{
      \uIf{$v\in \downarrow{\sf AntiMaybe}\setminus {\sf AntiMaybe}$}{
        ${\sf choose}\; v_m \in {\sf AntiMaybe}$ s.t. $v_m \wbs v$ \;
        ${\sf Depend}[v_m]:={\sf Depend}[v_m] \cup \{e\}$ \;
      }
      \Else{
        \uIf{$v'\in \downarrow{\sf AntiMaybe}$}{
          \If{$v'\notin {\sf AntiMaybe}$}{
            ${\sf choose}\; v_m \in {\sf AntiMaybe}$ s.t. $v_m \wbs v'$ \;
            ${\sf Depend}[v_m]:={\sf Depend}[v_m] \cup \{e\}$ \;
          }
        }
        \Else{
          \uIf{$v'\not\in {\sf Passed}$}{
            ${\sf Passed}:= {\sf Passed}\cup \{v'\}$ \;
            \uIf{$v' \notin  \uparrow{\sf AntiLosing}$}{
              \uIf{$ (v'\in \mathsf{Bad})$} {
                ${\sf AntiLosing}:= \lfloor{\sf AntiLosing} \cup \{v'\}\rfloor$ \;
                ${\sf Waiting}:= {\sf Waiting} \cup \{e\}$ ; \tcp{reevaluation of $e$}
              }
              \Else{
                ${\sf Depend}[v']:=\{(v,v')\}$ \;
                ${\sf AntiMaybe}:=\lceil{\sf AntiMaybe}\cup \{v'\}\rceil$ \;
                \uIf{$v\in V_A$}{
                  ${\sf Waiting}:= {\sf Waiting} \cup \{(v',v'')\mid v'\in\minac{\Succ{v'}}\}$ \;
                }
                \Else{
                  ${\sf Waiting}:= {\sf Waiting} \cup \{(v',v'')\mid v'\in\maxac{\Succ{v'}}\}$ \;
                }
              }
            }
            \Else(\tcp*[h]{reevaluation of $e$}){
              ${\sf Waiting}:= {\sf Waiting} \cup \{e\}$ ; 
            }
          }
          \Else(\tcp*[h]{reevaluation}){
            ${\sf Losing}^{*} := \begin{array}[t]{ll} &v\in V_A
              \wedge \bigwedge_{v''\in \min({\sf Succ}(v))}(v''\in \uparrow{\sf AntiLosing})\\
              \vee & v\in V_B\wedge \bigvee_{v''\in \max({\sf Succ}(v))}(v''\in \uparrow{\sf AntiLosing})
            \end{array}$ \;
            \uIf{${\sf Losing}^*$}{
              ${\sf AntiLosing}:= \lfloor{\sf AntiLosing} \cup \{v\}\rfloor$ \;
              ${\sf AntiMaybe}:= \lceil{\sf Passed} \setminus \uc{\sf AntiLosing}\rceil$ \;
              \tcp{back propagation}
              ${\sf Waiting}:= {\sf Waiting} \cup {\sf
                Depend}[v]$ ; 
            }
            \Else{
              \lIf{$\neg {\sf Losing}[v']$}{
                ${\sf Depend}[v']:= {\sf Depend}[v']\cup\{e\}$ 
              }
            }
          }
        }
      }
    }
  }
  
  \KwRet{$I\notin \uparrow{\sf AntiLosing}$}
\end{algorithm}

\paragraph{Optimised OTFUR} Let us discuss Algorithm~\ref{algonous},
an optimised version of OTFUR for the construction of $\wbs$-winning
$\star$-strategies in games with tba-simulations. Its high-level
principle is the same than in the original OTFUR, i.e. forward
exploration and backward propagation. At all times, it maintains
several sets:
\begin{inparaenum}[(i)]
\item {\sf Waiting} that stores edges waiting to be explored;
\item {\sf Passed} that stores nodes that have already been explored;
  and
\item {\sf AntiLosing} and {\sf AntiMaybe} which represent, by means
  of antichains (see discussion below) a set of surely losing states
  and a set of possibly winning states respectively.
\end{inparaenum}
The main {\bf while} loop runs until either no more edges are waiting,
or the initial state $I$ is surely losing. An iteration of the loop
first picks an edge $e=(v,v')$ from {\sf Waiting}, and checks whether
the exploration of this edge can be postponed (line $7$--$15$, see
discussion of the optimisations hereunder). Then, if $v'$ has not been
explored before (line $16$), cannot be declared surely losing (line
$18$), and does not belong to $\bad$ (line $19$), it is explored
(lines $23$--$28$). When $v'$ is found to be losing, $e$ is put back
in ${\sf Waiting}$ for back propagation (lines $21$ or $30$). The
actual back-propagation is performed at lines $32$--$38$ and triggered
by an edge $(v,v')$ s.t. $v'\in {\sf Passed}$.  Let us highlight the
three optimisations that prune the state space using a tba-simulation
$\wbs$ on the game states:
\begin{enumerate}
\item By the properties of $\wbs$, we can explore only the
  $\wbs$-minimal (respectively $\wbs$-maximal) successors of each $A$
  ($B$) state (see lines $3$, $26$ and $28$). We also consider maximal
  and minimal elements only when evaluating the status of a node in
  line $32$.
\item By Proposition~\ref{th-down}, the set of winning states in the
  game is downward-closed, hence the set of losing states is
  upward-closed, and we store the set of states that are losing for
  sure as an antichain ${\sf AntiLosing}$ of minimal losing states.
\item Symmetrically, the set of \emph{possibly winning states} is
  stored as an antichain ${\sf AntiMaybe}$ of maximal states. This set
  allows to postpone, and potentially avoid, the exploration of some
  states: assume some edge $(v,v')$ has been popped from {\sf
    Waiting}. Before exploring it, we first check whether either $v$
  or $v'$ belongs to $\dc{\sf AntiMaybe}$ (see lines $7$ and $11$). If
  yes, there is $v_m\in{\sf AntiMaybe}$ s.t. $v_m\wbs v$
  (resp. $v_m\wbs v'$), and the exploration of $v$ ($v'$) can be
  postponed. We store the edge $(v,v')$ that we were about to explore
  in ${\sf Depend}[v_m]$, so that, if $v_m$ is eventually declared
  losing (see line $36$), $(v,v')$ will be re-scheduled for
  exploration.  Thus, the algorithm can stop when all maximal $A$
  states have a successor that is covered by a non-losing one.
\end{enumerate}
Observe that optimisations 1 and 2 rely only the upward closure of the
losing states and were present in the antichain algorithm
of~\cite{FJR-fmsd11}. Optimisation 3 is original and exploits more
aggressively the notion of tba-simulation. It allows to keep at all
times an antichain of potentially winning states, which is crucial to
compute efficiently a winning $\star$-strategy. If, at the end of the
execution, $I\not\in\uc{\sf AntiLosing}$, we can extract from ${\sf
  AntiMaybe}$ a winning $\star$-strategy $\hsg$ as follows. For all
$v\in {\sf AntiMaybe}\cap V_A$, we let $\hsg(v)= v'$ such that $v'
\in\Succ{v}\cap\dc{\sf AntiMaybe}$. For all $v\in V_A \setminus {\sf
  AntiMaybe}$, we let $\hsg(v)=\star$. Symmetrically, if $I\in \uc{\sf
  AntiLosing}$, there is no winning strategy for $A$.

\paragraph{Correctness}
The correctness of Algorithm~\ref{algonous} is given by the following
theorem.
\begin{theorem}\label{th:algo-correct}
  When called on game $G$, Algorithm~\ref{algonous} always
  terminates. Upon termination, either $I\in \uc{\sf AntiLosing}$ and
  there is no winning strategy for $A$ in $G$, or $\hsg$ is a
  $\wbs$-winning $\star$-strategy.
\end{theorem}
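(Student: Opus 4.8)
The plan is to prove the three claims of Theorem~\ref{th:algo-correct} in sequence: termination, correctness when $I$ is declared losing, and correctness when $\hsg$ is returned. I would structure the argument around a collection of \emph{loop invariants} that capture the intended meaning of the four maintained sets. The key invariants are: (1) every state in $\uc{\sf AntiLosing}$ is genuinely losing, i.e. $\uc{\sf AntiLosing}\subseteq\Attr_\bad$; (2) every state in $\dc{\sf AntiMaybe}$ that has been fully explored is either winning or still has pending edges in {\sf Waiting} (directly or via {\sf Depend}); (3) {\sf AntiLosing} and {\sf AntiMaybe} are always genuine antichains, which follows from the fact that they are updated via $\minac{\cdot}$ and $\maxac{\cdot}$; and (4) whenever an edge $(v,v')$ is removed from {\sf Waiting} without being fully processed, it is recorded in some ${\sf Depend}[v_m]$ whose reactivation is triggered before $v_m$'s losing status can affect the answer. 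Establishing these invariants requires a case analysis over the branches of the main loop, checking that each branch preserves them; this is routine but lengthy.

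For \textbf{termination}, I would argue that each state is added to {\sf Passed} at most once (guarded by the test at line~$16$), and that {\sf AntiLosing} can only grow (as an upward-closed set over a finite state space), so it changes finitely often. An edge $(v,v')$ can be re-inserted into {\sf Waiting} only finitely often: it is re-scheduled either upon a losing-status discovery (back propagation via {\sf Depend}, line~$38$) or upon reevaluation, and each such event corresponds to a strict growth of $\uc{\sf AntiLosing}$ or a new element entering {\sf Passed}. Since both are bounded by $|V|$, the total number of loop iterations is finite. The interplay between the {\sf Depend}-postponement mechanism (optimisation~3) and genuine progress is the subtle point here: I must rule out the scenario where an edge is endlessly deferred to some $v_m\in{\sf AntiMaybe}$ without that $v_m$ ever changing status.

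For the \textbf{soundness} direction ($I\in\uc{\sf AntiLosing}$ implies no winning strategy), I would rely on invariant~(1) together with Proposition~\ref{th-down}: if $I\in\uc{\sf AntiLosing}\subseteq\Attr_\bad$, then $I\notin\win$, so no winning strategy exists. The content is precisely invariant~(1), which I would prove by induction on the updates: a state enters {\sf AntiLosing} only when it is in $\bad$ (line~$20$) or when its ${\sf Losing}^*$ flag holds (line~$33$), and the definition of ${\sf Losing}^*$ mirrors exactly the attractor step, specialised to the $\wbs$-extremal successors. Here I would invoke Lemma~\ref{lem-li} to justify that examining only $\minac{\Succ{v}}$ (for $A$-states) and $\maxac{\Succ{v}}$ (for $B$-states) suffices to detect membership in the attractor: a $\wbs$-minimal successor reaching the attractor drags the whole node in, and dually for $B$.

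For the \textbf{completeness} direction (producing a $\wbs$-winning $\star$-strategy), the plan is to appeal to Proposition~\ref{th-monog} with $\SS={\sf AntiMaybe}\cap V_A$ and the strategy $\sigma$ extending $\hsg$. I would show that upon termination with $I\notin\uc{\sf AntiLosing}$, the set {\sf AntiMaybe} satisfies the three hypotheses of that proposition: hypothesis~(\ref{item:hyp1}) because no element of {\sf AntiMaybe} nor its chosen successor is losing (invariants~(1) and~(2)), hypothesis~(\ref{item:hyp2}) because $I$ has been fully explored and is covered by {\sf AntiMaybe}, and hypothesis~(\ref{item:hyp3}) because termination of the loop means every successor of a chosen move is itself covered by {\sf AntiMaybe} (otherwise an edge would still be waiting, contradicting the exit condition). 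Proposition~\ref{th-monog} then yields directly that $\hsg=\sigma|_{{\sf AntiMaybe}\cap V_A}$ is $\wbs$-winning. \textbf{The main obstacle} I anticipate is hypothesis~(\ref{item:hyp3}), i.e. proving that at termination the successors of all chosen moves are genuinely covered by {\sf AntiMaybe}: one must argue that the postponement mechanism of optimisation~3 never causes a ``necessary'' edge to be permanently deferred, which is exactly where the {\sf Depend}-reactivation discipline and the antichain maintenance of {\sf AntiMaybe} (via line~$34$, where it is recomputed as $\maxac{{\sf Passed}\setminus\uc{\sf AntiLosing}}$) must be shown to interact correctly.
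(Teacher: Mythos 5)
Your plan follows essentially the same route as the paper's proof: termination via the monotone growth and eventual stabilisation of ${\sf Passed}$ and $\uc{\sf AntiLosing}$, soundness of the losing verdict via the invariant ${\sf AntiLosing}\subseteq{\sf Losing}$ (the paper's ${\sf Inv}^4$, likewise justified through Lemma~\ref{lem-li} for restricting to $\wbs$-extremal successors), and the winning case by applying Proposition~\ref{th-monog} to $\SS={\sf AntiMaybe}\cap V_A$, with your invariants (2) and (4) playing the role of the paper's ${\sf Inv}^1$--${\sf Inv}^3$ and ${\sf Inv}^5$ in discharging hypothesis~(\ref{item:hyp3}). You also correctly single out the interaction between the ${\sf Depend}$-postponement mechanism and the antichain maintenance of ${\sf AntiMaybe}$ as the delicate point, which is exactly where the bulk of the paper's case analysis lies.
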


We split the proof of Theorem~\ref{th:algo-correct} into two main
propositions establishing respectively termination and soundness of
the algorithm. The proof of soundness relies on auxiliary lemmata
establishing invariants of the algorithm. In those proofs, we rely on
the following notations.  First of all, let us denote by
$\textit{Name}_i$ the state of the set $\textit{Name}$ at the $i$-th
iteration of the ${\bf while}$ in Algorithm~\ref{algonous}.  Let us
define two notations: ${\sf StateWaiting}_i = \{v| \exists v',
(v,v')\in {\sf Waiting}_i \mbox{ or } (v',v)\in {\sf Waiting}_i\}$,
${\sf Visited}_i = \{v | \exists j\le i, v \in {\sf
  StateWaiting}_j\}$, and $S_i = {\sf AntiMaybe}_{i} \cup {\sf
  StateWaiting}_i$.  In words, ${\sf StateWaiting}_i$ is the set of
states which appear in ${\sf Waiting}$ at the $i$-th iteration of {\bf
  while}, ${\sf Visited}_i$ is the set of the states which have
appeared in ${\sf Waiting}$ at some iteration before the $i$-th one,
and $S_i$ is the set of the states which appear in ${\sf Waiting}_i$
or which belong to ${\sf AntiMaybe}_{i}$. Finally, we denote by ${\sf
  AntiMaybe}$, ${\sf Visited}$, ${\sf StateWaiting}$, ${\sf
  AntiLosing}$ and ${\sf Waiting}$ the state of those sets at the end
of the execution of the algorithm.

\begin{proposition}[Termination]
  Algorithm~\ref{algonous} always terminates
\end{proposition}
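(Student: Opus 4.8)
The plan is to argue that the main \textbf{while} loop cannot execute forever. Since the loop body pops an edge from \textsf{Waiting} at each iteration, non-termination would require that edges are added to \textsf{Waiting} infinitely often. First I would observe that every edge $e=(v,v')$ ever inserted into \textsf{Waiting} has both endpoints in $V=V_A\cup V_B$, which is finite, so there are only finitely many distinct edges that can ever appear. Therefore non-termination must come from a \emph{single} edge being re-inserted into \textsf{Waiting} infinitely many times. I would then classify the insertions: an edge is put into \textsf{Waiting} either (a) when a new node $v'$ is explored and its minimal/maximal successors are scheduled (lines~24--28, which can happen at most once per node since it is guarded by $v'\notin\textsf{Passed}$ and immediately sets $\textsf{Passed}:=\textsf{Passed}\cup\{v'\}$), (b) when an edge to a freshly-discovered losing or not-yet-evaluated node is put back for re-evaluation (lines~21 and~30), or (c) during back-propagation, when $e$ is drawn from $\textsf{Depend}[v]$ after $v$ is declared losing (line~37).

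The heart of the argument is to bound the back-propagation insertions (c). I would introduce a progress measure based on the monotone growth of \textsf{AntiLosing}. The key invariant to establish is that $\uc{\textsf{AntiLosing}}$ is \emph{non-decreasing} across iterations and in fact \emph{strictly increases} every time a back-propagation is triggered: line~34 executes $\textsf{AntiLosing}:=\minac{\textsf{AntiLosing}\cup\{v\}}$ only in the branch guarded by $\textsf{Losing}^*$, and in that branch $v$ was not previously in $\uc{\textsf{AntiLosing}}$ (this is guaranteed by the guard at line~6, $v\notin\uc{\textsf{AntiLosing}}$, which protects the whole body). Hence each back-propagation step adds at least one genuinely new state to the upward-closed losing set. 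Because $V$ is finite, $\uc{\textsf{AntiLosing}}$ can strictly grow only finitely many times, so back-propagation is triggered only finitely often, and each trigger re-inserts only the finitely many edges of $\textsf{Depend}[v]$.

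Putting the three cases together, only finitely many insertions of type (a) occur (one batch per node, $|V|$ nodes), only finitely many of type (c) occur (one batch per strict increase of the finite set $\uc{\textsf{AntiLosing}}$), and type (b) insertions are each immediately consumed and charged to the one-time exploration of their target node. I would conclude that the total number of edges ever pushed onto \textsf{Waiting} is finite, and since each iteration consumes one edge, the loop halts. The step I expect to be the main obstacle is making the progress measure fully rigorous: one must check carefully that whenever $e\in\textsf{Depend}[v]$ is re-scheduled, the node $v$ that triggered it had genuinely just entered the losing set, so that no edge can cycle through back-propagation without a corresponding strict increase of $\uc{\textsf{AntiLosing}}$; this requires verifying that \textsf{Depend} entries are only acted upon under the $\textsf{Losing}^*$ branch and that the guards at lines~4 and~6 prevent spurious reprocessing of states already known to be losing.
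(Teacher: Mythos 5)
Your proof is correct and takes essentially the same route as the paper's: both rest on the observations that $\textsf{Passed}$ and $\uc{\textsf{AntiLosing}}$ grow monotonically inside the finite state space, and that every insertion into $\textsf{Waiting}$ is accompanied by a strict growth of one of these sets. The paper packages this as ``once these two sets stabilise, $\textsf{Waiting}$ strictly decreases,'' whereas you bound the total number of insertions directly by charging each one to such a growth step; your explicit enumeration of the insertion sites merely makes rigorous what the paper dismisses as easily checked in the code.
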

\begin{proof}
  In order to prove the termination of Algorithm~\ref{algonous}, we
  simply need to prove that the ${\bf while}$ loop cannot be iterated
  infinitely because each iteration of this loop takes finitely many
  steps.  Let us prove it by contradiction.  Let us assume that there
  is an infinite execution of Algorithm~\ref{algonous}. Observe that
  for all indices $i<j$, ${\sf Passed}_{i} \subseteq {\sf Passed}_j$
  and $\uc{\sf AntiLosing}_{i} \subseteq \uc{\sf
    AntiLosing}_j$. Hence, let $K$ be s.t. for all $i\ge K$, ${\sf
    Passed}_{i} = {\sf Passed}_K$ and $\uc{\sf AntiLosing}_{i} =
  \uc{\sf AntiLosing}_K$, i.e. after $K$ steps, ${\sf Passed}$ and
  ${\sf AntiLosing}$ stabilise and remain the same along the rest of
  the infinite run. Such a $K$ necessarily exists because there are
  finitely many states in the arena. Now, we can easily check in the
  code that if ${\sf Passed}$ and $\uc{\sf AntiLosing}$ are not
  modified in an iteration $i$ (i.e., ${\sf Passed}_{i}={\sf
    Passe}_{i-1}$ and $\uc{\sf AntiLosing_i}=\uc{\sf
    AntiLosing_{i-1}}$) then ${\sf Waiting}$ decreases strictly,
  i.e. ${\sf Waiting}_{i} \subseteq {\sf Waiting}_{i-1}$. Since at
  step $K$, ${\sf Waiting}_K$ is necessarily finite, and since ${\sf
    AntiLosing}$ and ${\sf Passed}$ stay constant from step $K$, there
  is a step $K'\geq K$ s.t. ${\sf Waiting}_{K'}=\emptyset$. However,
  this implies that the algorithms stops at step
  $K'$. Contradiction.\qed
\end{proof}

Let us now turn our attention to soundness. We first establish several
invariants of the algorithm:

\begin{lemma}\label{lm-inv}
  Algorithm~\ref{algonous} admits the three following
  loop-invariants: \begin{align*}
    {\sf Inv}^1_i &: {\sf Visited}_i\setminus{\sf StateWaiting}_i  \subseteq \dc{\sf AntiMaybe}_i \cup \uc{\sf AntiLosing}_i\\
    {\sf Inv}^2_i &: \forall v \in \dc{\sf AntiMaybe}_i \cap V_A,
    \exists v'\in {\sf Succ}(v):\\
    &\qquad v' \in \dc{\sf AntiMaybe}_i \lor (v,v')\in {\sf
      Depend}_i[\dc{\sf AntiMaybe}_i]
    \cup {\sf Waiting}_i\\
    {\sf Inv}^3_i &: \forall v \in \dc{\sf AntiMaybe}_i
    \cap V_B, \forall v'\in {\sf Succ}(v):\\
    &\qquad v' \in \dc{\sf AntiMaybe}_i \lor (v,v')\in {\sf
      Depend}_i[\dc{\sf AntiMaybe}_i] \cup {\sf Waiting}_i
  \end{align*}
\end{lemma}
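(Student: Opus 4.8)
The plan is to prove ${\sf Inv}^1_i$, ${\sf Inv}^2_i$ and ${\sf Inv}^3_i$ simultaneously by induction on the iteration index $i$ of the {\bf while} loop, as they are mutually dependent: ${\sf Inv}^1$ controls which already–visited states have been classified, while ${\sf Inv}^2$ and ${\sf Inv}^3$ guarantee that every state currently regarded as (possibly) winning still carries a pending or already–discharged obligation on its successors. Before starting the induction I would record a few auxiliary invariants on which the main argument rests, most importantly that $\dc{\sf AntiMaybe}_i \cap \uc{\sf AntiLosing}_i = \emptyset$ and that ${\sf AntiMaybe}_i \subseteq {\sf Passed}_i \setminus \uc{\sf AntiLosing}_i$. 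These follow from the $\minac{\cdot}$/$\maxac{\cdot}$ closures applied at lines $20$, $24$ and $35$, together with Lemma~\ref{lem-li} and Proposition~\ref{th-down}, which state that winning is $\wbs$-downward closed and losing is $\wbs$-upward closed.

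For the base case ($i=0$) I would simply inspect the initialisation (lines $1$--$3$). ${\sf Inv}^1_0$ holds because ${\sf Visited}_0 = {\sf StateWaiting}_0$, so its left-hand side is empty; ${\sf Inv}^3_0$ holds vacuously since $\dc{\sf AntiMaybe}_0 = \dc{\{I\}}$ contains only $A$-states ($\wbs$ relates states of the same player and $I\in V_A$); and ${\sf Inv}^2_0$ reduces to the obligation on $I$, which is met because every edge $(I,v')$ with $v'\in\minac{\Succ{I}}$ is enqueued in ${\sf Waiting}_0$. The inductive step is then a case analysis on the branch taken at iteration $i+1$ after popping $e=(v,v')$. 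The \emph{postponement} branches (lines $7$--$9$ and $11$--$14$) only move $e$ from ${\sf Waiting}$ into ${\sf Depend}[v_m]$ for some $v_m\in{\sf AntiMaybe}\subseteq\dc{\sf AntiMaybe}$, so the disjunct $(v,v')\in{\sf Depend}[\dc{\sf AntiMaybe}]\cup{\sf Waiting}$ of ${\sf Inv}^2$/${\sf Inv}^3$ is preserved and the covering side of ${\sf Inv}^1$ is untouched; the $\bad$ and already–losing sub-cases (lines $19$--$21$ and line $30$) only grow $\uc{\sf AntiLosing}$ and are absorbed by the disjointness invariant.

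The \emph{exploration} branch (lines $23$--$28$) is where the tba-simulation enters. The new state $v'$ is added to ${\sf AntiMaybe}$ and all its $\minac{\cdot}$- (if $v'\in V_A$) or $\maxac{\cdot}$- (if $v'\in V_B$) successors are enqueued, which is precisely what establishes ${\sf Inv}^2$ or ${\sf Inv}^3$ for $v'$ itself. For the other states $u$ that newly enter $\dc{\sf AntiMaybe}$ because $v'\wbs u$, I would inherit the obligation from the freshly added maximal element $v'$ using the defining conditions (i)/(ii) of the tba-simulation together with the fact that $\dc{\sf AntiMaybe}$ is $\wbs$-downward closed, so that a good successor (or pending edge) of $v'$ yields one for $u$.

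The main obstacle is the \emph{back-propagation} case (lines $33$--$37$, when ${\sf Losing}^*$ holds): here $v$ is moved into ${\sf AntiLosing}$, ${\sf AntiMaybe}$ is entirely recomputed as $\maxac{{\sf Passed}\setminus\uc{\sf AntiLosing}}$, and ${\sf Depend}[v]$ is poured back into ${\sf Waiting}$. Both $\dc{\sf AntiMaybe}$ and $\uc{\sf AntiLosing}$ change, so all three invariants must be re-established against a shifted classification. The two key points I would argue are: (1) the definition of ${\sf Losing}^*$, combined with Lemma~\ref{lem-li} and Proposition~\ref{th-down}, guarantees that $v$ — and hence its whole $\wbs$-upward closure — is genuinely losing, so no state is wrongly removed from the winning side; and (2) every obligation previously discharged through $v$, namely exactly the edges recorded in ${\sf Depend}[v]$, is re-inserted into ${\sf Waiting}$, so that for each state still in $\dc{\sf AntiMaybe}_{i+1}$ the witnessing successor or pending edge demanded by ${\sf Inv}^2$/${\sf Inv}^3$ is restored. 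Proving that this re-insertion exactly compensates the recomputation of ${\sf AntiMaybe}$, and that the ${\sf Depend}$ bookkeeping has inductively captured all such dependencies, is the delicate part; the remaining reevaluation sub-case (line $39$) is then a routine update of ${\sf Depend}[v']$ that preserves the invariants by the same reasoning.
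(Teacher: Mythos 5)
Your plan follows the same route as the paper's proof: induction on the loop index and an exhaustive case analysis over the branches of Algorithm~\ref{algonous} (the paper proves the three invariants one after another and organises the inductive step for ${\sf Inv}^2$/${\sf Inv}^3$ by asking which part of the disjunction could be destroyed, rather than by which line is executed, but that is presentational). The problem is that the two places where the real work happens are left as descriptions of what should be proved. The back-propagation case is not merely ``delicate'': it is essentially the whole proof of ${\sf Inv}^2$ and ${\sf Inv}^3$. To close it you need, concretely, that the state $w$ inserted into ${\sf AntiLosing}$ at line~34 is the \emph{only} element of ${\sf Passed}$ that leaves $\dc{\sf AntiMaybe}$ when ${\sf AntiMaybe}$ is recomputed as $\maxac{{\sf Passed}\setminus\uc{\sf AntiLosing}}$, so that edges recorded in ${\sf Depend}[u]$ for any other $u$ remain in ${\sf Depend}[\dc{\sf AntiMaybe}]$, while the edges in ${\sf Depend}[w]$ are exactly those re-injected into ${\sf Waiting}$ at line~36; and you must separately handle a successor $v'\notin{\sf Passed}$ that drops out of $\dc{\sf AntiMaybe}$ because its covering state is $w$, using the fact that the postponement at lines~7--14 stored the corresponding edge in ${\sf Depend}[w]$. ``The re-insertion exactly compensates the recomputation'' is the statement to be established, not an argument for it.

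Second, the inheritance step in your exploration case would fail as described. For a state $u$ that enters $\dc{\sf AntiMaybe}$ only because $v'\wbs u$ for the freshly passed $v'$, condition~(i) (resp.~(ii)) of the tba-simulation turns a successor of $v'$ that lies in $\dc{\sf AntiMaybe}$ into a $\wbs$-dominated, hence also covered, successor of $u$ --- but it does \emph{not} transfer the other disjunct. The edges enqueued at lines~26/28 all emanate from $v'$ itself, so no edge out of $u$ is placed in ${\sf Waiting}$ or ${\sf Depend}$ at that point, and a pending witness $(v',v'')\in{\sf Waiting}$ for $v'$ yields nothing of the required form for $u$. (To be fair, the paper's own case~4 and its base case are equally terse here; what ultimately matters for soundness is that ${\sf Inv}^2$ is only invoked on states of ${\sf AntiMaybe}$ itself once ${\sf Waiting}=\emptyset$, when the pending-edge disjunct has already been discharged into one of the other two.) You should either weaken the quantification in your inductive claim accordingly or make this deferral explicit; as written, this step of your induction does not go through.
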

\begin{proof}
  We start by observing that the sets $\uc{\sf AntiLosing}$ and ${\sf
    Visited}$ increase monotonically along the execution of the
  algorithm. Then let us prove each invariant separatly
  \begin{itemize}
  \item[(${\sf Inv}^1_i$)] At $i=0$, all the states in ${\sf Visited}$
    are in ${\sf StateWaiting}$, hence the initialization is trivial.

    Let us now assume that ${\sf Inv}^1_i $ is true for a given $i$
    and let us prove that ${\sf Inv}^1_{i+1}$ is also true.  Observe
    that the definition of ${\sf Visited}_i$ depends only on ${\sf
      StateWaiting}$: ${\sf Visited}_i=\cup_{j\leq i} {\sf
      StateWaiting}_j$. This implies in particular that ${\sf
      StateWaiting}_i\subseteq {\sf Visited}_i$ for all $i$. Thus, we
    only need to consider the modification of ${\sf StateWaiting}$
    during one iteration to prove this invariant. At each iteration of
    the loop, one edge is taken from ${\sf Waiting}$, and several
    edges are potentially added. Thus, the sets ${\sf
      StateWaiting}_{i+1}\setminus {\sf StateWaiting}_i$ and ${\sf
      StateWaiting}_{i}\setminus {\sf StateWaiting}_{i+1}$ are both
    potentially non-empty. This discussion allows to conclude that, for
    all states $v\in{\sf Visited}_{i+1}\setminus{\sf
      StateWaiting}_{i+1}$, we only need to consider two cases:
    \begin{inparaenum}[(i)]
    \item either $v\in {\sf Visited}_i\setminus {\sf StateWaiting}_i$;
    \item or $v\in {\sf StateWaiting}_{i}\setminus {\sf
        StateWaiting}_{i+1}$
    \end{inparaenum}
    since by definition, nodes in ${\sf StateWaiting}_{i+1}\setminus
    {\sf StateWaiting}_i$ are not in ${\sf Visited}_{i+1}\setminus{\sf
      StateWaiting}_{i+1}$.

    \begin{enumerate}
    \item Let $v$ be in $\big({\sf Visited}_{i+1}\setminus{\sf
        StateWaiting}_{i+1}\big)\cap \big({\sf Visited}_i\setminus
      {\sf StateWaiting}_i\big)$. Since $v\in {\sf Visited}_i\setminus
      {\sf StateWaiting}_i$, $v$ is in $\dc{\sf AntiMaybe}_{i}\cup
      \uc{\sf AntiLosing}_{i}$, by induction hypothesis.
      \begin{enumerate}
      \item If $v\in \uc{\sf AntiLosing}_{i}$, then $v\in \uc{\sf
          AntiLosing}_{i+1}$ (see the begining of the proof), hence,
        $v\in \dc{\sf AntiMaybe}_{i+1}\cup \uc{\sf AntiLosing}_{i+1}$,
        and $v$ respects the invariant.
      \item Otherwise, $v\in\dc{\sf AntiMaybe}_{i}$. Then, either
        $v\in\dc{\sf AntiMaybe}_{i+1}$ (i.e., $\dc{\sf AntiMaybe}$ has
        not decreased), which implies $v\in \dc{\sf
          AntiMaybe}_{i+1}\cup \uc{\sf AntiLosing}_{i+1}$, and $v$
        respects the invariant. Or $v\not\in\dc{\sf AntiMaybe}_{i+1}$,
        i.e. {\sf AntiMaybe} has decreased. This can occur only in
        line 35, but, in this case, all states that are removed from
        $\dc{\sf AntiMaybe}$ are inserted in ${\sf
          StateWaiting}_{i+1}$ (actually, edges containing those
        states are inserted in ${\sf Waiting}_{i+1}$ in line 36),
        which contradicts our hypothesis that $v\in{\sf
        Visited}_{i+1}\setminus{\sf StateWaiting}_{i+1}$.
      \end{enumerate}
    \item Otherwise, let $v$ be in $v\in {\sf
        StateWaiting}_{i}\setminus {\sf StateWaiting}_{i+1}$.  This
      can occur only because an edge $(v_1,v_2)$ has been popped in
      line 5, with either $v=v_1$ or $v=v_2$. We consider those two
      cases separately.
      \begin{enumerate}\item 
        If $v=v_1$, then, $v$ is necessarily in ${\sf Passed}$ because
        lines $26$ and $28$ are the only lines where new edges are
        built and, if we execute one of these lines, adding an edge of
        the form $(v_1,v_2)$ implies that $v_1$ has been added in
        ${\sf Passed}$ on line $17$.  One can also check that, in this
        conditional, $v_1$ is either added to ${\sf AntiMaybe}$ or to
        ${\sf AntiLosing}$. Thus, when a state is in ${\sf Passed}$,
        then it will always be in $\dc{\sf AntiMaybe}\cup \uc{\sf
          AntiLosing}$. Hence $v$ belongs to $\dc{\sf
          AntiMaybe}_{i+1}\cup \uc{\sf AntiLosing}_{i+1}$.
      \item Otherwise, if $v=v_2$, then either $v \in \dc{\sf
          AntiMaybe}_{i+1}\cup \uc{\sf AntiLosing}_{i+1}$, or it is
        not already passed and the conditional on line $16$ is
        satisfied.  As a consequence, at the end of the iteration, $v
        \in {\sf AntiMaybe}_{i+1}$ or $v\in {\sf AntiLosing}_{i+1}$.
      \end{enumerate}
    \end{enumerate}

  \item[(${\sf Inv}^2_i$)] At $i=0$, the only state in ${\sf
      AntiMaybe}$ is $I$ and all the edges of the form $(I,v)$ are in
    ${\sf Waiting}$, hence the initialization is trivial.

    Let us assume that ${\sf Inv}^2_i $ is true for a given $i$ and
    let us prove that ${\sf Inv}^2_{i+1}$ is also true.  To do so, we
    have to inspect all the cases which could make ${\sf Inv}^2_{i+1}$
    false when ${\sf Inv}^2_i $ is true. For the sake of clarity, we
    use the \checkmark symbol to mean that a case is closed.

    \begin{enumerate}
    \item Let us assume that there exist $v \in \dc{\sf AntiMaybe}_i$
      and $v'\in {\sf Succ}(v)$ such that $v' \in \dc{\sf
        AntiMaybe}_i\setminus \dc{\sf AntiMaybe}_{i+1}$. The strict
      inclusion of $\dc{\sf AntiMaybe}_{i+1}$ in $\dc{\sf
        AntiMaybe}_{i}$ implies that line $35$ has been executed
      during the $i$-th iteration of the loop. Then, either $v'$ is
      the state which is put in ${\sf AntiLosing}_{i+1}$ on line $34$
      and $(v,v') \in {\sf Depend}_i[v']$ (see line 36) and thus
      $(v,v') \in {\sf Waiting}_{i+1}$ (\checkmark), or $v'\in {\sf
        Passed}$ and by construction on line $35$, $v'$ necessarily
      belongs to $\dc{\sf AntiMaybe}_{i+1}$ which contradicts our
      hypothesis (\checkmark), or $v'\notin {\sf Passed}$ and there
      are two further possible cases. Indeed, if $v'\notin {\sf
        Passed}$, either $(v,v') \in {\sf Waiting}_i$ (\checkmark), or
      the exploration of edge $(v,v')$ has been postponed (lines
      7--14) because there was a $\wbs$-greater state than $v'$ in
      ${\sf AntiMaybe}$.  In this latter case, either there is again a
      $\wbs$-greater state than $v'$ in ${\sf AntiMaybe}$
      (\checkmark), or $(v,v')$ necessarily belongs to ${\sf
        Depend}_i[w]$ where $w$ is the state put in ${\sf
        AntiLosing}_{i+1}$ (\checkmark).
    \item Let us assume that there exist $v \in \dc{\sf AntiMaybe}_i$
      and $v'\in {\sf Succ}(v)$ such that $(v,v') \in {\sf
        Waiting}_i\setminus {\sf Waiting}_{i+1}$.  This implies that
      $(v,v')$ has been popped (line 5) at the begining of the $i+1$th
      iteration, and there are three possible cases.
      \begin{enumerate}
      \item $(v,v')\in {Depend}_{i+1}[\dc{\sf AntiMaybe}_{i+1}]$ (line
        $9$ or $14$) (\checkmark)
      \item $v'\in {\sf Passed}_{i+1}\setminus {\sf Passed}_{i}$ (line
        17). Then, either (lines 20--21) $v' \in \uc{\sf
          AntiLosing}_{i+1}$ and $(v,v')\in {\sf Waiting}_{i+1}$ which
        contradicts our hypothesis (\checkmark), or (line 24) $v'\in
        \dc{\sf AntiMaybe}_{i+1}$ (\checkmark).
      \item \label{item:1} $(v,v')$ goes in the "reevaluation"
        part. Then, either there is no successor of $v$ outside of
        $\uc{\sf AntiLosing}_{i}$ hence $v \notin {\sf
          AntiMaybe}_{i+1}$ (\checkmark), or there exists a successor
        $w$ of $v$ which is not in $\uc{\sf AntiLosing}_{i}$. In this
        latter case, either $(v,w)$ has not already been treated and
        $(v,v')\in {\sf Waiting}_{i+1}$ (\checkmark), or $w \in
        \dc{\sf AntiMaybe}_{i+1}$ (\checkmark).
      \end{enumerate}
    \item Let us assume that there exist $v \in \dc{\sf AntiMaybe}_i$
      and $v'\in {\sf Succ}(v)$ such that $(v,v') \in {\sf
        Depend}_{i}[\dc{\sf AntiMaybe}_{i}]\setminus {\sf
        Depend}_{i+1}[\dc{\sf AntiMaybe}_{i+1}]$.  This implies that a
      state $w$ of ${\sf AntiMaybe}_{i}$ is found losing and added to
      ${\sf AntiLosing}_{i+1}$ (line $34$). One can assume that $w
      \neq v'$ because the case $w=v'$ is treated in case~1
      above. Then, since $(v,v') \notin {\sf Depend}_{i+1}[\dc{\sf
        AntiMaybe}_{i+1}]$, $(v,v') \in {\sf Depend}_{i}[w]$ because,
      $w$ is necessarily the only state in ${\sf Passed}_i$ which is
      in $\dc{\sf AntiMaybe}_{i}\setminus \dc{\sf AntiMaybe}_{i+1}$
      (this holds since ${\sf AntiMaybe}_{i+1}=\break \minac{{\sf
          Passed}_{i+1}\setminus{\sf Losing}_{i+1}}$, and $w$ is the
      only state identified as losing during iteration $i+1$), and to
      have ${\sf Depend}_{i}[w']\neq \emptyset$, it is necessary that
      $w'$ is in ${\sf Passed}$ (\checkmark).
    \item Let us assume that there exist $v \in \dc{\sf
        AntiMaybe}_{i+1}\setminus \dc{\sf AntiMaybe}_{i}$.  Then, line
      $26$ or line $28$ is executed and for all $v'\in
      \maxac{\Succ{v}}$: $(v,v') \in {\sf Waiting}_{i+1}$.
    \end{enumerate}

  \item[(${\sf Inv}^2_i$)] The proof of $(3)$ can be done with the
    disjunction of cases of the proof of $(2)$. The only difference is
    for the case~\ref{item:1}.  Indeed, the case is simpler because
    when $(v,v')$ goes in the "reevaluation" part, it is not necessary
    to consider other successors, because $v$ must have only
    non-losing successors. Then, either there is a successor of $v$ in
    $\uc{\sf AntiLosing}_{i}$ hence $v \notin \dc{\sf
      AntiMaybe}_{i+1}$ (\checkmark), or $v'$ is in ${\sf
      Passed}_{i+1}\setminus \uc{\sf AntiLosing}_{i}$ and thus belongs
    to $\dc{\sf AntiMaybe}_{i+1}$ (\checkmark). \qed
  \end{itemize}
\end{proof}

\begin{lemma}\label{lm-inv4}
Algorithm~\ref{algonous} admits the following loop-invariant:
\begin{align*} {\sf Inv}^4_i &: {\sf AntiLosing}_i \subseteq {\sf
    Losing}
\end{align*}
\end{lemma}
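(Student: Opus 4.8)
The plan is to prove ${\sf Inv}^4$ by induction on the iteration index $i$, isolating the only two program locations at which ${\sf AntiLosing}$ is enlarged, namely line~$20$ and line~$34$; here ${\sf Losing}=V\setminus\win=\Attr_\bad$ denotes the set of genuinely losing states. Two facts will be used repeatedly. First, every bad state is losing: $\bad=\Attr_0\subseteq\Attr_\bad={\sf Losing}$. Second, and crucially, ${\sf Losing}$ is $\wbs$-\emph{upward closed}, which is the exact dual of Proposition~\ref{th-down} (since $\win$ is downward closed, its complement $\Attr_\bad$ is upward closed). In particular, under the induction hypothesis ${\sf AntiLosing}_i\subseteq{\sf Losing}$ we obtain $\uc{{\sf AntiLosing}_i}\subseteq\uc{{\sf Losing}}={\sf Losing}$, so mere membership in $\uc{{\sf AntiLosing}_i}$ already entails being genuinely losing.

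For the base case, ${\sf AntiLosing}_0=\varnothing\subseteq{\sf Losing}$ is trivial. For the inductive step, assume ${\sf AntiLosing}_i\subseteq{\sf Losing}$. Whenever ${\sf AntiLosing}$ changes, its new value has the form $\minac{{\sf AntiLosing}_i\cup\{w\}}$ for the freshly inserted state $w$ (line~$20$ or line~$34$); since taking the minimal antichain only discards elements, $\minac{{\sf AntiLosing}_i\cup\{w\}}\subseteq{\sf AntiLosing}_i\cup\{w\}$, so it suffices to show $w\in{\sf Losing}$. At line~$20$ we have $w=v'$ with $v'\in\bad\subseteq{\sf Losing}$, which settles this case.

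The remaining, and in my view the only delicate, case is line~$34$, where $w=v$ and the guard ${\sf Losing}^*$ holds (evaluated against the current ${\sf AntiLosing}_i$). I would split on the owner of $v$. If $v\in V_B$, then ${\sf Losing}^*$ yields a $\wbs$-maximal successor $v''\in\uc{{\sf AntiLosing}_i}$, hence $v''\in{\sf Losing}$ by the observation above; as $v$ is a $B$-state possessing a losing successor, $v\in{\sf Losing}$ by the attractor characterisation. If $v\in V_A$, then ${\sf Losing}^*$ only guarantees that every $\wbs$-minimal successor of $v$ lies in $\uc{{\sf AntiLosing}_i}\subseteq{\sf Losing}$, whereas declaring an $A$-state losing requires \emph{all} its successors to be losing. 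This gap is precisely where upward closure does the work: for an arbitrary $u\in\Succ{v}$, the minimal-antichain property provides $w'\in\minac{\Succ{v}}$ with $u\wbs w'$, and since $w'\in{\sf Losing}$ and ${\sf Losing}$ is upward closed, $u\in{\sf Losing}$. Thus every successor of $v$ is losing, so $v\in{\sf Losing}$ as well.

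Combining the cases, the inserted state $w$ always lies in ${\sf Losing}$, whence ${\sf AntiLosing}_{i+1}\subseteq{\sf AntiLosing}_i\cup\{w\}\subseteq{\sf Losing}$, which closes the induction. I expect the $A$-state subcase of line~$34$ to be the main obstacle: the algorithm inspects only $\wbs$-minimal successors for efficiency, and passing from ``all minimal successors losing'' to ``all successors losing'' is simply false in general and becomes valid only through the structural upward-closure property furnished by Proposition~\ref{th-down}.
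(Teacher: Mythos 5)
Your proof is correct and follows essentially the same route as the paper: induction on the iteration, with a case analysis on the only two insertion points (line~$20$, where the inserted state is in $\bad$, and line~$34$, where the guard ${\sf Losing}^*$ is analysed separately for $A$- and $B$-states). The only difference is that you make explicit the appeal to the $\wbs$-upward closure of the losing set (the dual of Proposition~\ref{th-down}) both to pass from $\uc{{\sf AntiLosing}_i}$ to ${\sf Losing}$ and from ``all minimal successors losing'' to ``all successors losing'', a step the paper's proof uses but leaves implicit under the phrase ``by induction assumption''.
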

\begin{proof}
  At $i=0$, the ${\sf AntiLosing}$ is $\empty$, hence the
  initialization is trivial.

  Let us assume that ${\sf Inv}^4_i $ is true for a given $i$ and let
  us prove that ${\sf Inv}^4_{i+1}$ is also true.  We simply have to
  check that states in $\uc{\sf AntiLosing}$ at the $i$-th iteration
  are losing.

  There are two lines where states are added to $\uc{\sf AntiLosing}$:
  lines $20$ and $34$.
  \begin{itemize}
  \item (line $20$) A state $v'$ can be added in ${\sf AntiLosing}$ on
    line $20$. In this case, the conditional on line $19$ ensures that
    the state is in ${\sf Bad}$, hence $v'$ is in ${\sf Losing}$.
  \item (line $34$) A state $v$ can be added in ${\sf AntiLosing}$ on
    line $34$. In this case, the definition of $Losing^*$ and the
    conditional on line $33$ ensures that $v$ is in ${\sf
      Losing}$. Indeed, if $v\in V_A$ then all its minimal successors
    are in $\uc{\sf AntiLosing}_i$, hence all its successors are
    losing by induction assumption and thus $v$ is in ${\sf
      Losing}$. Otherwise, $v\in V_B$ and it has a successor in
    $\uc{\sf AntiLosing}_i$, hence one of its successors is losing by
    induction assumption and thus $v$ is in ${\sf Losing}$. \qed
  \end{itemize}
\end{proof}

\begin{lemma}
  Algorithm~\ref{algonous} admits the following loop-invariant:
\begin{align*} {\sf Inv}^5_i &: \forall \overline{v}, \forall (v,v')\in {\sf Depend}_i[\overline{v}]: \overline{v}\wbs v'\textrm{ or } (\overline{v}\wbs v \textrm{ and } \overline{v}\neq v)
\end{align*}
\end{lemma}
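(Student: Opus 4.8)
The plan is to establish ${\sf Inv}^5$ as a loop invariant by induction on the iteration index $i$, in exactly the style used for ${\sf Inv}^1$--${\sf Inv}^4$. The key structural observation is that the statement of ${\sf Inv}^5_i$ mentions only the contents of ${\sf Depend}$ and the fixed partial order $\wbs$ --- it refers to neither ${\sf AntiMaybe}$ nor ${\sf AntiLosing}$. Hence the reshuffling of those two antichains (lines 20 and 34--35) cannot by itself falsify the invariant; the only iterations that could are those that \emph{write a new edge} into some ${\sf Depend}[\overline{v}]$. The first step is therefore to enumerate every such write. Apart from the initialisation ${\sf Depend}(I):=\varnothing$ (line 1), the writes occur exactly at lines 9, 14, the assignment ${\sf Depend}[v']:=\{(v,v')\}$ (line 23), and line 38.

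For the base case $i=0$, every ${\sf Depend}$ entry is empty (in particular ${\sf Depend}_0(I)=\varnothing$), so no edge lies in any ${\sf Depend}_0[\overline{v}]$ and the invariant holds vacuously. For the inductive step I would assume ${\sf Inv}^5_i$; every edge already present in ${\sf Depend}$ still satisfies the disjunction since $\wbs$ is fixed and no line rewrites an old edge under a different index, so it suffices to check the freshly inserted edge at each of the four writing lines. At line 9 the popped edge $e=(v,v')$ is added to ${\sf Depend}[v_m]$, where line 8 picks $v_m\in{\sf AntiMaybe}$ with $v_m\wbs v$; the guard of line 7 gives $v\in\dc{\sf AntiMaybe}\setminus{\sf AntiMaybe}$, so $v\notin{\sf AntiMaybe}$ while $v_m\in{\sf AntiMaybe}$, forcing $v_m\neq v$ --- this is precisely the second disjunct with $\overline{v}=v_m$. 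At line 14 the same edge goes into ${\sf Depend}[v_m]$ with $v_m\wbs v'$ chosen on line 13, which is the first disjunct with $\overline{v}=v_m$. At lines 23 and 38 the edge $(v,v')$ is placed into ${\sf Depend}[v']$, so $\overline{v}=v'$ and the first disjunct $\overline{v}\wbs v'$ holds by \emph{reflexivity} of $\wbs$.

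I expect no real obstacle: ${\sf Inv}^5$ is a bookkeeping invariant whose proof is a mechanical case split over four lines. The only points meriting care are (a) reading $v_m\neq v$ off the guard of line 7 --- this is exactly what justifies the \emph{strict} form $\overline{v}\wbs v\wedge\overline{v}\neq v$ of the second disjunct rather than a bare $\overline{v}\wbs v$ --- and (b) observing that both the overwrite at line 23 and the union at line 38 target ${\sf Depend}[v']$ (and not another index), so that reflexivity indeed applies to their inserted edges. I would also note that the back-propagation at line 36 only \emph{reads} ${\sf Depend}[v]$; any (implicit) emptying of that list is a removal of edges, which can never break a universally quantified invariant and therefore needs no separate argument.
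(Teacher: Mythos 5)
Your proof is correct and follows the same route as the paper's: the paper's entire argument for this lemma is to check the lines where an edge is inserted into ${\sf Depend}$ (it cites lines 9, 14 and 23), and your induction with its case split over the writing lines does exactly that, reading the first disjunct off the chosen $v_m\wbs v'$ or off reflexivity, and the strict second disjunct off the guard $v\in\dc{\sf AntiMaybe}\setminus{\sf AntiMaybe}$. You are in fact slightly more careful than the paper, which omits the insertion ${\sf Depend}[v']:={\sf Depend}[v']\cup\{e\}$ in the final else-branch of the reevaluation part; as you observe, that case is disposed of by reflexivity exactly like line 23, so nothing is lost.
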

\begin{proof}
  The invariant is easily established by checking lines 9, 14 and 23,
  which are the only lines where an edge is added to {\sf Depend}.\qed
\end{proof}

We are now ready to prove soundness of the algorithm:
\begin{proposition}[Soundness]
  When Algorithm~\ref{algonous} terminates, either $I\in \uc{\sf
    AntiLosing}$ and there is no winning strategy for $A$ in this
  game, or $\sigma$ is a $\wbs$-winning $\star$-strategy.
\end{proposition}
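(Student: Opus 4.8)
The plan is to prove the disjunction by case analysis on whether $I\in\uc{\sf AntiLosing}$ holds when the algorithm stops. The first alternative is short. If $I\in\uc{\sf AntiLosing}$, there is $w\in{\sf AntiLosing}$ with $I\wbs w$. By the invariant ${\sf Inv}^4$ (Lemma~\ref{lm-inv4}), ${\sf AntiLosing}\subseteq{\sf Losing}=\Attr_\bad$, so $w\in\Attr_\bad$; since $I\wbs w$, Lemma~\ref{lem-li} gives $I\in\Attr_\bad$, i.e. $I$ is losing, and hence $A$ has no winning strategy. So from here on I would concentrate on the case $I\notin\uc{\sf AntiLosing}$, where the guard of the main loop forces ${\sf Waiting}=\emptyset$ (and thus ${\sf StateWaiting}=\emptyset$) at termination, and the goal is to show that the extracted $\star$-strategy $\hsg$ is $\wbs$-winning.

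First I would record two easy facts. (a) $\dc{\sf AntiMaybe}\cap\bad=\emptyset$: every $\bad$ state that is popped is put into ${\sf AntiLosing}$ (line~$20$), and since ${\sf AntiMaybe}=\maxac{{\sf Passed}\setminus\uc{\sf AntiLosing}}$ excludes $\uc{\sf AntiLosing}$, we get ${\sf AntiMaybe}\cap\bad=\emptyset$; this upgrades to the downward closure because the compatibility clause of a tba-simulation states that $v_1\wbs v_2$ and $v_2\in\bad$ imply $v_1\in\bad$, so no element of ${\sf AntiMaybe}$ can sit above a $\bad$ state. (b) $I\in\dc{\sf AntiMaybe}$: applying ${\sf Inv}^1$ (Lemma~\ref{lm-inv}) with ${\sf StateWaiting}=\emptyset$ yields ${\sf Visited}\subseteq\dc{\sf AntiMaybe}\cup\uc{\sf AntiLosing}$, and $I\in{\sf Visited}\setminus\uc{\sf AntiLosing}$, hence $I\in\dc{\sf AntiMaybe}\cap V_A$.

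The heart of the argument is a \emph{closure lemma}: at termination, every $v\in\dc{\sf AntiMaybe}\cap V_A$ has some successor in $\dc{\sf AntiMaybe}$, and every $v\in\dc{\sf AntiMaybe}\cap V_B$ has \emph{all} its successors in $\dc{\sf AntiMaybe}$. I would prove it by well-founded induction on the strict order induced by $\wbs$ (finite, hence well-founded). The base data are ${\sf Inv}^2$ (for $A$-states) and ${\sf Inv}^3$ (for $B$-states): since ${\sf Waiting}=\emptyset$, the relevant successor $v'$ is either already in $\dc{\sf AntiMaybe}$, or the edge $(v,v')$ lies in ${\sf Depend}[w]$ for some $w\in\dc{\sf AntiMaybe}$. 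In the latter subcase I invoke ${\sf Inv}^5$: either $w\wbs v'$, whence $v'\in\dc{w}\subseteq\dc{\sf AntiMaybe}$ directly, or $w\wbs v$ with $w\neq v$, i.e. $w$ is \emph{strictly} below $v$. This last possibility is exactly what drives the induction: I apply the induction hypothesis to $w$ to get the required successor of $w$ inside $\dc{\sf AntiMaybe}$, then transport it to a successor of $v$ via the defining clause of the tba-simulation for $w\wbs v$ (the $V_A$-clause when $v\in V_A$, the $V_B$-clause when $v\in V_B$), whose monotonicity keeps the transported successor in $\dc{\sf AntiMaybe}$. I expect this transport-under-strict-decrease step to be the main obstacle, since it is the only point where all three ingredients --- the ${\sf Depend}$ bookkeeping read off ${\sf Inv}^5$, the alternating nature of $\wbs$, and the well-foundedness of $\wbs$ --- must be combined, and one must be careful that the induction is genuinely applied to a $\wbs$-smaller state.

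Finally I would assemble everything through Proposition~\ref{th-monog}. Take $\SS={\sf AntiMaybe}\cap V_A$ and let $\sigma$ be any plain strategy extending $\hsg$, so that $\sigma|_\SS=\hsg$ and, for $v\in\SS$, $\sigma(v)\in\Succ{v}\cap\dc{\sf AntiMaybe}$ (such a successor exists by the $V_A$ half of the closure lemma, which is precisely how $\hsg$ is defined). Hypothesis~(\ref{item:hyp1}) follows from fact~(a), because $\SS\cup\sigma(\SS)\subseteq\dc{\sf AntiMaybe}$; hypothesis~(\ref{item:hyp2}) is fact~(b); and hypothesis~(\ref{item:hyp3}) is the $V_B$ half of the closure lemma applied to $\sigma(\SS)\subseteq\dc{\sf AntiMaybe}\cap V_B$, giving $\Succ{\sigma(\SS)}\subseteq\dc{\sf AntiMaybe}\cap V_A=\dc{\SS}$. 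Proposition~\ref{th-monog} then yields that $\hsg=\sigma|_\SS$ is a $\wbs$-winning $\star$-strategy, which closes the second alternative and completes the proof.
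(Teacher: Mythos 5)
Your proof is correct, and its top-level skeleton coincides with the paper's: the same case split on $I\in\uc{\sf AntiLosing}$, the same use of ${\sf Inv}^4$ (the paper does not spell out the appeal to Lemma~\ref{lem-li}, but your more explicit version is the right reading), and the same final instantiation of Proposition~\ref{th-monog} with $\SS={\sf AntiMaybe}\cap V_A$. Where you genuinely diverge is in how the successor-coverage facts are established. The paper treats them in two separate ad hoc steps: for well-definedness of $\hsg$ it only considers $v\in{\sf AntiMaybe}\cap V_A$ and kills the case ``$\overline{v}\wbs v$, $\overline{v}\neq v$'' outright by maximality of $v$ in the antichain (no induction needed); for hypothesis~(iii) it performs a single transport step through the tba-simulation whose justification is rather terse (it even writes $\hsg(w)$ for a $w\in V_B$). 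You instead package both halves into one closure lemma over all of $\dc{\sf AntiMaybe}$ and prove it by Noetherian induction on the finite poset, applying the induction hypothesis to the strictly dominating state $\overline{v}$ and then transporting its covered successors down to $v$ via the appropriate clause of the tba-simulation. This buys a uniform and, frankly, more airtight argument than the paper's for hypothesis~(iii), at the cost of an induction the paper avoids for the $V_A$ half. One cosmetic point: with the paper's convention ($s'\wbs s$ puts $s$ in $\dc{\{s'\}}$), a $w$ with $w\wbs v$ and $w\neq v$ strictly \emph{dominates} $v$ rather than sitting ``strictly below'' it; your induction is nevertheless sound because either orientation of the strict order is well-founded on a finite set, and you do apply the hypothesis to the correct state.
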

\begin{proof}
  We first consider the case where Algorithm~\ref{algonous} ends with
  $I \notin \uc{\sf AntiLosing}$.  In particular, at the end of the
  execution, ${\sf Waiting}$ is empty.

  Let us first show that $\hsg$ is well-defined, i.e. for all
  $v\in {\sf AntiMaybe}\cap V_A$, there is $v'\in\Succ{v}$
  s.t. $v'\in\dc{\sf AntiMaybe}$. This stems from ${\sf Inv}^1_i$,
  ${\sf Inv}^2_i$ and ${\sf Inv}^5_i$. Indeed, at the end of the
  execution, ${\sf Waiting}$ is empty, hence ${\sf StateWaiting}$ is
  empty. Thus, ${\sf Inv}^2_i$ entails that all $v\in {\sf
    AntiMaybe}\cap V_A$ have a successor $v'$ s.t. either $v'\in
  \dc{\sf AntiMaybe}$ or $(v,v')\in {\sf Depend}_i[\dc{\sf
    AntiMaybe}]$. In the former case, the property is established. In
  the latter case ($(v,v')\in {\sf Depend}_i[\dc{\sf AntiMaybe}]$),
  let $\overline{v}$ be a node from $\dc{\sf AntiMaybe}$
  s.t. $(v,v')\in{\sf Depend}_i(\overline{v})$. By ${\sf Inv}^5_i$,
  either $\overline{v}\wbs v'$, or $\overline{v}\wbs v$ and
  $\overline{v}\neq v$. We first observe that the case
  `$\overline{v}\wbs v$ and $\overline{v}\neq v$' is not possible
  because $v\in {\sf AntiMaybe}$ by hypothesis, and
  $\overline{v}\in\dc{\sf AntiMaybe}$. Thus, $\overline{v}\wbs v'$,
  which implies that $v'\in \dc{\sf AntiMaybe}$ since
  $\overline{v}\in\dc{\sf AntiMaybe}$. Thus, $\hsg$ is well-defined.

  We conclude the proof by invoking Proposition~\ref{th-monog}. To be
  able to apply this proposition, we need a strategy $\sigma$. We let
  $\sigma$ be any concretisation of $\hsg$, and $\SS={\sf
    AntiMaybe}\cap V_A$. The choice of the concretisation of $\hsg$
  does not matter, because the hypothesis required by
  Proposition~\ref{th-monog} are properties of $\sigma(v)$ for states
  $v\in \SS$ only, and $\SS={\sf AntiMaybe}\cap V_A$ is exactly the
  support of $\hsg$. Let us show that $\sigma$ respects the three
  hypothesis of Proposition~\ref{th-monog}, i.e. that:
  \begin{inparaenum}[(i)]
  \item $({\sf AntiMaybe}\cap V_A \cup\hsg({\sf AntiMaybe}\cap
    V_A))\cap \bad = \emptyset$;
  \item $I \in \dco{{\sf AntiMaybe}\cap V_A}{\wbs}$; and
  \item $\mathsf{succ}(\hsg({\sf AntiMaybe}\cap V_A))\subseteq
    \dco{{\sf AntiMaybe}\cap V_A}{\wbs}$.
  \end{inparaenum}
  
  \begin{itemize}
  \item[(i)] By definition of alternating simulation, ${\sf Bad}$ is
    upward closed ($\uc{\sf Bad}={\sf Bad}$). No bad state can be
    added to ${\sf AntiMaybe}$ during the execution, because of the
    conditional in line $19$ which is false when a state is added to
    ${\sf AntiMaybe}$ (line $24$).  Moreover by definition of
    $\hsg$, $\hsg({\sf AntiMaybe}\cap V_A)\subseteq \dc{\sf
      AntiMaybe}$, hence (i) is satisfied.
  \item[(ii)] By assumption, $I \notin \uc{\sf AntiLosing}$. As a
    consequence of ${\sf Inv}^1_i$, $I$ belongs to $\dc{\sf
      AntiMaybe}$, hence (ii) is satisfied.
  \item[(iii)] By definition of $\hsg$, $\hsg({\sf AntiMaybe}\cap
    V_A)\subseteq \dc{\sf AntiMaybe}\cap V_B$. Now, by ${\sf
      Inv}^3_i$, at the end of the execution, ${\sf Succ}(\dc{\sf
      AntiMaybe}\cap V_B)\subseteq \dc{\sf AntiMaybe}\cap
    V_A$. Indeed, at the end of the execution, ${\sf StateWaiting}$ is
    empty and $(v,v') \in {\sf Depend}[\dc{\sf AntiMaybe}]$ implies,
    by ${\sf Inv}^5_i$, that either $v' \in \dc{\sf AntiMaybe}$, or $v
    \in \dc{\sf AntiMaybe}$ which means that there is a state $w$ s.t.
    $w\wbs v$ in ${\sf AntiMaybe}$. In this latter case, by definition
    of the tba-simulation, there exists $\overline{v}\in {\sf
      Succ}(v)$ such that $\hsg(w)\wbs \overline{v}$, therefore
    $\overline{v}\in \dc{\sf AntiMaybe}$.  Hence (iii) is satisfied.
  \end{itemize}
  Hence, by Proposition~\ref{th-monog}, the $\star$-strategy
  $\sigma|_{\SS}$ (with $\SS={\sf AntiMaybe}\cap V_A$) is a
  $\wbs$-winning $\star$-strategy. It is easy to check that
  $\sigma|_{\SS}=\hsg$, by definition of $\sigma$.
  
  \medskip We conclude the proof by considering the case where
  Algorithm~\ref{algonous} ends with $I \in \uc{\sf AntiLosing}$. By
  ${\sf Inv}^4_i$, there is no winning strategy for $A$ in the
  game.\qed
\end{proof}

\paragraph{Why simulations are not sufficient} Let us exhibit two
examples of games equipped with a simulation $\succeq$ which is not a
tba-simulation, to show why tba-simulations are crucial for our
optimisations. In Fig.~\ref{fig:clos} (left), $\bad=\{v_1',v_2'\}$,
and the set of winning states is not $\succeq$-downward closed (gray
states are losing). In the game of Fig.~\ref{fig:clos} (right),
$\bad=\{b_1,b_2\}$ and Algorithm~\ref{algonous} does not develop the
successors of $v'$ (because $v\succeq v'$, and $v\in{\sf AntiMaybe}$
when first reaching $v'$). Instead, it computes a purportedly winning
$\star$-strategy $\hsg$ s.t. $\hsg(v)=v''$ and
$\hsg(v')=\star$. Clearly this $\star$-strategy is not
$\succeq$-winning (actually, there is no winning strategy in this
game).

\begin{figure}[t]
\begin{center}
    \scalebox{0.8}{
      \begin{tikzpicture}[->,>=stealth',shorten >=1pt,auto,node distance=2cm,
                    semithick]

  \tikzstyle{every state}=[text=black]

  \node[state, fill=white] (A) {$v_1$};
  \node[state, fill=white, shape=rectangle] (B) [right of=A, node distance=1.5cm]{$v_1''$};
  \node[state, fill=gray!20] (A') [below of=A, node distance=2.5cm] {$v_2$};
 
  \node[state, shape=rectangle] (A'') [left of=A, node distance=1.5cm] {} ;
  \node[state, shape=rectangle, fill=gray!20] (A''') [left of=A', node distance=1.5cm] {} ;

  \node[state, shape=rectangle, fill=gray!60] (D) [right of=A, node distance=1.5cm, yshift=-1cm] {$v_1'$} ;
  \node[state, shape=rectangle, fill=gray!60] (E) [right of=A', node distance=1.5cm] {$v_2'$} ;

  \node[state, fill=white] (I) [left of=A'', node distance=1cm, yshift=-1.25cm] {$v_0$};

  \path (A) edge (B)
        (A) edge (D)
        (A') edge (E)
        (I) edge (A'')
        (I) edge (A''')
        (A'') edge (A)
        (A''') edge (A')
;
\path[-]
        (A) edge [dashed] node [sloped,yshift=.3cm,xshift=-.3cm] {$\succeq$} (A')
    
        (D) edge [dashed] node [sloped,yshift=.3cm,xshift=-.3cm] {$\succeq$} (E)
;
\end{tikzpicture}
}
\textcolor{white}{blabla}\scalebox{0.8}{
\begin{tikzpicture}[->,>=stealth',shorten >=1pt,auto,node distance=2cm,
                    semithick]

  \tikzstyle{every state}=[text=black]

  \node[state, rectangle, fill=white] (A)  {$v''$};
  \node[state, fill=white] (A0) [left of=A, node distance=2cm] {$v$};
  \node[state, fill=white] (X1) [right of=A, node distance=2cm] {$v'$};
 
  \node[state, rectangle, fill=gray!60] (bad1) [above of=A0, node distance=1.3cm, xshift=0cm] {$b_1$};
  \node[state, rectangle, fill=gray!60] (bad2) [above of=X1, node distance=1.3cm, xshift=0cm] {$b_2$};

 \path (A) edge (X1)
        (-3,0) edge (A0);
     \path    (A0) edge (A)
       (A0)  edge (bad1)
       (X1) edge (bad2);
    \path[-]  (A0) edge [dashed,bend left] node [sloped] {$\succeq$} (X1);
\end{tikzpicture}}
\caption{A simulation and the downward closure are not sufficient to
  apply Algorithm~\ref{algonous}.  }\label{fig:clos}
\end{center}
  \vspace{-.6cm}
\end{figure}
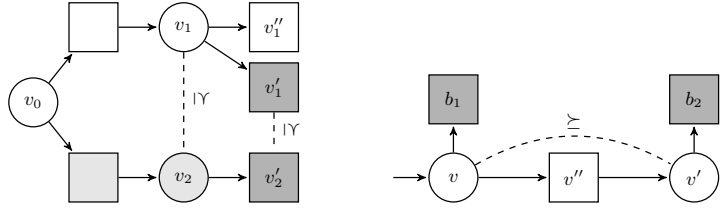

\section{Applications\label{sec:applications}}
To show the relevance of our approach, let us briefly explain how to
apply it to three problems that can be reduced to a safety game: LTL
realisability, real-time scheduler synthesis and determinisation of
timed automata. These three problems share the following
characteristics, that make our technique particularly appealing: 
\begin{inparaenum}[(i)]
\item they have practical applications where an efficient
  implementation of the winning strategy is crucial ;
\item the arena of the safety game is not given explicitly and is at
  least exponential in the size of the problem instance; and
\item they admit natural tba-simulations.
\end{inparaenum}

\paragraph{$A$-deterministic and $\wbs$-monotonic games} To show that
a \emph{tba-simulation} exists in a safety game for which a
\emph{simulation relation} $\wbs$ is already known, we rely on the
notions of \emph{$A$-determinism} and \emph{$\wbs$-monotonicity} of
safety games. Intuitively, this means that player $A$ can always chose
to play \emph{the same set of actions} in each of its states, and that
playing the same action $a$ in two states $v_1\wbs v_2$ yields two
states $v_1'$ and $v_2'$ with $v_1'\wbs v_2'$ \footnote{For example,
  in the urn-filling game (Fig.~\ref{fig:substrgame}), Player $A$ can
  always choose between taking $1$ or $2$ balls, from all states where
  at least $2$ balls are left.}.  Formally, let $G=(V_A,V_B,E,I,\bad)$
be a finite turn-based safety game and $\Sigma$ a finite alphabet.  A
\emph{labeling} of $G$ is a function $\mathsf{lab}: E \rightarrow
\Sigma$. For all states $v\in V_A\cup V_B$, and all $a\in \Sigma$, we
let $\Succlab{v}{a}=\{v'\mid (v,v')\in E\land {\sf lab}(v,v')=a\}$ be
the set of $a$-labeled successors of $v$. Then, $(G,\mathsf{lab})$ is
\emph{$A$-deterministic} iff there is a set of actions
$\Sigma_A\subseteq \Sigma$ s.t. for all $v\in V_A$:
\begin{inparaenum}[(i)]
\item $|\Succlab{v}{a}|=1$ for all $a\in\Sigma_A$ and 
\item $|\Succlab{v}{a}|=0$ for all $a\not\in\Sigma_A$.
\end{inparaenum}
Moreover, a labeling $\mathsf{lab}$ is $\wbs$-\emph{monotonic} (where
$\wbs$ is a simulation relation on the states of $G$) iff for all
$v_1, v_2\in V_A\cup V_B$ such that $v_1\wbs v_2$, for all
$a\in\Sigma$, for all $v_2'\in\Succlab{v_2}{a}$: there is
$v_1'\in\Succlab{v_1}{a}$ s.t. $v'_1\wbs v_2'$. Then:
\begin{theorem}\label{th-red}
  Let $G=(V_A,V_B,E,I,\bad)$ be a finite turn-based safety game, let
  $\wbs$ be a simulation relation on $G$ and let $\mathsf{lab}$ be a
  \emph{$\wbs$-monotonic} labeling of $G$. If $(G,\mathsf{lab})$ is
  \emph{$A$-deterministic}, then $\wbs$ is a tba-simulation relation.
\end{theorem}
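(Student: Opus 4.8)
The plan is to verify directly that $\wbs$ meets the three defining conditions of a tba-simulation, reusing as much as possible the fact that $\wbs$ is already a simulation relation. Since a simulation is by definition a partial order, that part is free, and it only remains to check, for an arbitrary pair $v_1\wbs v_2$ with $v_1\notin\bad$, the three successor/$\bad$ conditions. First I would observe that two of the three are inherited verbatim from the simulation hypothesis and require neither $A$-determinism nor monotonicity: condition (iii) of tba-simulation ($v_2\in\bad$ implies $v_1\in\bad$) is exactly clause (ii) of the definition of a simulation, and condition (ii) of tba-simulation (the $B$-state case: when $v_1\in V_B$, every $v_2'\in\Succ{v_2}$ is matched by some $v_1'\in\Succ{v_1}$ with $v_1'\wbs v_2'$) is precisely clause (i) of the definition of a simulation, which in fact holds for \emph{all} $v_1$, so the $B$-state instance comes for free.

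The real work is condition (i), the $A$-state condition, which a plain simulation need not satisfy. Fix $A$-states $v_1\wbs v_2$ with $v_1\notin\bad$ and an arbitrary successor $v_1'\in\Succ{v_1}$; I must exhibit $v_2'\in\Succ{v_2}$ with $v_1'\wbs v_2'$. Let $a=\mathsf{lab}(v_1,v_1')$, so that $v_1'\in\Succlab{v_1}{a}$. By $A$-determinism the action $a$ must lie in $\Sigma_A$ (otherwise $\Succlab{v_1}{a}$ would be empty), and therefore $v_2$, being an $A$-state, has exactly one $a$-labelled successor; call it $v_2'$, so that $v_2'\in\Succlab{v_2}{a}\subseteq\Succ{v_2}$. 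It then remains to show $v_1'\wbs v_2'$. Here I would apply $\wbs$-monotonicity to the pair $v_1\wbs v_2$, the action $a$, and the successor $v_2'\in\Succlab{v_2}{a}$: it yields some $v_1''\in\Succlab{v_1}{a}$ with $v_1''\wbs v_2'$. But $A$-determinism also forces $|\Succlab{v_1}{a}|=1$, hence $\Succlab{v_1}{a}=\{v_1'\}$ and $v_1''=v_1'$, giving $v_1'\wbs v_2'$ as required.

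The crux — and the only place where the hypotheses interact non-trivially — is this last step, because $\wbs$-monotonicity points in the \emph{opposite} direction to the $A$-state condition: monotonicity transports an $a$-successor of the larger state $v_2$ down to an $a$-successor of the smaller state $v_1$, whereas condition (i) starts from a successor of $v_1$ and demands one of $v_2$. What reconciles the two directions is exactly $A$-determinism: since each $A$-state has a \emph{unique} $a$-successor for every $a\in\Sigma_A$, the state $v_1''$ produced by monotonicity is forced to coincide with the $v_1'$ we started from. Thus the whole argument rests on the singleton property of $\Succlab{v}{a}$ for $A$-states, and I expect the only care needed to be in confirming that every outgoing edge of $v_1$ carries a label in $\Sigma_A$, so that this uniqueness (and hence the conversion of monotonicity into the forward matching) is genuinely available.
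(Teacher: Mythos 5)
Your proof is correct and follows essentially the same route as the paper: both reduce to checking the $A$-state successor condition, use $A$-determinism to produce an equally-labelled successor $v_2'$ of $v_2$, and use $\wbs$-monotonicity to conclude $v_1'\wbs v_2'$. Your explicit resolution of the direction mismatch --- monotonicity yields some $v_1''\in\Succlab{v_1}{a}$ with $v_1''\wbs v_2'$, and the singleton property $\Succlab{v_1}{a}=\{v_1'\}$ forces $v_1''=v_1'$ --- is exactly the detail that the paper's one-line appeal to monotonicity leaves implicit (the paper even states the conclusion with a typo, $v_2\wbs v_2'$ instead of $v_1'\wbs v_2'$), so your version is the more careful of the two.
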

\begin{proof}
  Since $\wbs$ is a simulation relation, we only need to prove that
  for all $v_1\in V_A$, for all $v_2$ s.t. $v_1\wbs v_2$, for all
  $v_1'\in\Succ{v_1}$, there is $v_2'\in V_B$ s.t. $v_2'\in\Succ{v_2}$
  and $v_1'\wbs v_2'$. Let $v_1,v_2\in V_A$ be s.t. $v_1\wbs v_2$, and
  let $v_1'$ be a state from $\Succ{v_1}$. Since $G$ is
  $A$-deterministic, there is $v_2'\in \Succ{v_2}$ s.t. ${\sf
    lab}(v_1,v_1')={\sf lab}(v_2,v_2')$. Since $\mathsf{lab}$ is
  $\wbs$-\emph{monotonic}, we also have $v_2\wbs v_2'$. \qed
\end{proof}

Thus, when a safety game $G$ is labeled, $A$-deterministic,
$\wbs$-monotonic and equipped with a simulation relation $\wbs$ that
can be computed directly on the description of the states, our
approach can be applied out-of-the-box.  In this case, the algorithm
of Section~\ref{sec:effic-comp-succ} yields, if it exists, a winning
$\star$-strategy $\hsg$, that can be described by the set of pairs
$(v,a)$ for all $v$ in the support of $\hsg$ (the maximal antichain of
winning reachable states), and where $a$ is the label of the edge
$(v,\hsg(v))$. That is, we can store the \emph{action} one needs to
play from $v$ instead of the \emph{successor} $\hsg(v)$. Observe that
no information needs to be stored for the partial order $\wbs$ that
can be directly computed on the description of the states. Then, a
controller implementing $\hsg$ works as follows: when the system state
is $v$, the controller looks for a pair $(\overline{v},a)$ in the
strategy description with $\overline{v}\wbs v$. It executes action $a$
from the current state $v$, which is possible by $A$-determinism, and
respects the definition of $\wbs$-concretisation by
$\wbs$-monotonicity. Finding $\overline{v}$ should be efficient,
because we expect the antichain of winning reachable states to be
compact. Let briefly explain why this technique applies to the three
cases mentioned above.

\paragraph{LTL realisability} LTL \cite{Pnu77} is a popular logic to
express properties of computer systems. An LTL formula defines a set
of traces, i.e. infinite sequences of valuations of atomic
propositions.  In the LTL realisability problem \cite{PR89}, the set
of atomic propositions is partitioned into \emph{controllable} and
\emph{uncontrollable} ones. The controller and the environment are two
players that compete in a game, where, at each turn, they fix the
valuations of the atomic propositions they own, thereby building a
trace. The play is winning for the controller iff the trace satisfies
a given LTL formula. A formula is \emph{realisable} iff the controller
has a winning strategy in the game.

In~\cite{FJR-fmsd11}, Filiot, Jin and Raskin reduce the realisability
problem to a safety game, whose states are vectors of bounded natural
numbers. They consider the partial order $\succeq$ on the states of
the game, defined as $v\succeq v'$ iff $v[i]\geq v'[i]$ for all
coordinates $i$. They show it is a \emph{simulation relation} and rely
on it to define an efficient antichain algorithm (based on the OTFUR
algorithm) to solve the class of safety games they obtain
from the realisability problem.

As a matter of fact, our technique generalises these
results. Theorem~\ref{th-red} can be invoked to show that $\succeq$ is
a \emph{tba-simulation}. Hence, Algorithm~\ref{algonous} can be use to
solve LTL realisability. As already explained, the antichain algorithm
of~\cite{FJR-fmsd11} contains two of the three optimisations that are
present in Algorithm~\ref{algonous} (see
Section~\ref{sec:effic-comp-succ}). Our results thus provide a general
theory to explain the excellent performance of the technique
of~\cite{FJR-fmsd11}, and have the potential to further improve it.

\paragraph{Multiprocessor real-time scheduler synthesis} We consider
the problem of computing a correct scheduler for a given set of
\emph{sporadic real-time tasks}. A sporadic task $(C,T,D)$ is a
process that repeatedly creates \emph{jobs}, s.t. each job creation
(also called \emph{request}) occurs at least $T$ time units after the
previous one. Each job models a computational payload. It needs at
most $C$ units of CPU time to complete, and must obtain them within a
certain time frame of length $D$ starting from the request (otherwise
the job \emph{misses} its deadline). We assume the tasks run on a
platform of $m$ identical CPUs. A scheduler is a function that
assigns, at all times, jobs to available CPUs. A scheduler is
\emph{correct} iff it ensures that no job ever misses a deadline,
whatever the sequence of requests.

This problem can be reduced to a safety game~\cite{BM-esa10} where the
two players are the scheduler and the coalition of the tasks
respectively.  In this setting, a \emph{winning} strategy for Player
$A$ is a \emph{correct} scheduler. In practice, this approach is
limited by the size of the \emph{game arena}, which is, in general,
exponentially larger than the description of the set of tasks
\cite{BM-esa10}. Nevertheless, we can rely on Theorem~\ref{th-red} to
show that the game admits a tba-simulation relation. Indeed the
simulation relation $\succeq$ introduced in~\cite{GGL-rts13} (to solve
a related real-time scheduling problem using antichain techniques)
naturally induces a simulation relation on the set of states of the
game. An $A$-deterministic and $\succeq$-monotonic labeling is
obtained if we label moves of the environment by the set of tasks
producing a request, and the scheduler moves by a total order on all
the tasks, which is used as a priority function determining which
tasks are scheduled for running.

\paragraph{Determinisation of timed automata}
Timed automata (TA for short) \cite{AD-tcs94} are a well-established
model for real-time systems. TAs extend finite automata with clocks,
that are real-valued variables evolving at the same rate, and that can
be constrained by guards on transitions. Fig.~\ref{fig:ex-game-beg}
presents a timed automaton with one clock $x$, locations $\ell_0$,
$\ell_1$ and $\ell_2$ and the alphabet $\{a,b\}$. A TA naturally
defines a \emph{timed language} and two timed automata are said to be
\emph{equivalent} if they admit the same language.

A TA is deterministic if from every state, at most one transition is
firable for each action. The determinisation of a TA $\mathcal{A}$ is
the construction of a deterministic TA equivalent to $\mathcal{A}$ and
is a crucial operation for several problems such as test generation,
fault diagnosis or more generally all the problems closed to the
complement operation.  Unfortunately, TAs are not determinisable in
general~\cite{AD-tcs94}. Fig.~\ref{fig:ex-game-beg} illustrates the
difficulty: from location $\ell_0$, there are two edges with action
$a$ and guard $0<x<1$, but $x$ is reset on only one of these edges.
Hence, a deterministic version of this TA should have at least two
clocks to keep track of the two possible clock values on these two
branches. Based on this idea, one can build a TA (with loops) for
which the number of clock values that must be tracked simultaneously
cannot be bounded. Furthermore, checking whether a given TA admits a
deterministic version is an undecidable problem~\cite{AD-tcs94}. As a
consequence, only partial algorithms exist for determinisation.

So far, the most general of those techniques has been introduced
in~\cite{BSJK-fossacs11} and consists in turning a TA $\mathcal{A}$
into a safety game $G_{\mathcal{A},(Y,M)}$ (parametrised by a set of
clocks $Y$ and a maximal constant $M$). Then, a deterministic TA
over-approximating $\mathcal{A}$ (with set of clocks $Y$ and maximal
constant $M$), can be extracted from any Player $A$ strategy. If the
strategy is winning, then the approximation is an \emph{exact}
determinisation. The idea of the construction is that Player $A$
actions consists in choosing a good reset policy for the clocks to
avoid states in which the deterministic TA could over-approximate
$\mathcal{A}$.  States of the game can be seen as pairs
$((S,S_\top),r)$ where $S$ is a set of configurations corresponding to
an approximate state estimate; $S_\top \subseteq S$ are the
configurations corresponding to the state estimate which is surely not
approximated; and $r$ is a set of valuations of $Y$. Bad states which
have to be avoided are states where $S_\top$ is empty. Then, a
tba-simulation $\wbs_{\sf det}$ can naturally be defined on this game:

\begin{lemma}\label{lm-det}
  Let $\mathcal{A}$ be a timed automaton, $Y$ be a set of clocks and
  $M$ be a maximal constant for guards.  Then $G_{\mathcal{A},(Y,M)}$
  admits a tba-simulation relation $\wbs_{\sf det}$ defined as
  follows: $((S,S_\top),r) \wbs_{\sf det} ((S',S'_\top),r')$ iff
  $r=r'$, $S \supseteq S'$ and $S_\top \subseteq S_\top$.
\end{lemma}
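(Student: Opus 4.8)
The plan is not to verify the three tba-simulation conditions by hand but to invoke Theorem~\ref{th-red}. Concretely, I would equip $G_{\mathcal{A},(Y,M)}$ with the natural labeling in which a move of the determiniser (Player $A$) is labelled by the reset policy $\rho\subseteq Y$ it selects and a move of the environment (Player $B$) by the pair (action, time region) it plays; these are exactly the choices offered to each player by the construction of~\cite{BSJK-fossacs11}. It then suffices to establish three facts: that $\wbs_{\sf det}$ is a partial order compatible with $\bad$, that this labeling is $A$-deterministic, and that it is $\wbs_{\sf det}$-monotonic. Throughout, I read the third defining condition as $S_\top\subseteq S'_\top$ and understand $\wbs_{\sf det}$ to relate only states of the same player, which is legitimate since the game alternates strictly between estimate updates and reset choices and the ordering data $(S,S_\top,r)$ is independent of whose turn it is.

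The bookkeeping facts are quick. Reflexivity and transitivity of $\wbs_{\sf det}$ follow from those of $=$, $\supseteq$ and $\subseteq$; antisymmetry holds because $S\supseteq S'$ and $S'\supseteq S$ force $S=S'$ (and similarly for the $\top$-components), while $r=r'$ is an equality. Compatibility with $\bad$ is precisely condition~(iii) of a tba-simulation: since a state is bad exactly when its $\top$-component is empty, $((S',S'_\top),r')\in\bad$ gives $S'_\top=\emptyset$, whence $S_\top\subseteq S'_\top=\emptyset$ shows $((S,S_\top),r)\in\bad$. Finally, $A$-determinism is immediate: from any state the determiniser may pick any reset policy, each policy yields a single successor, so $|\Succlab{v}{\rho}|=1$ for every policy label $\rho$ and $0$ for every other label.

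The heart of the argument is $\wbs_{\sf det}$-monotonicity. I would fix $((S,S_\top),r)\wbs_{\sf det}((S',S'_\top),r')$ and a common label, and compare the two successors. Because $r=r'$ and both a time-elapse/action and a reset policy act identically on the $Y$-valuations, the $r$-components of the successors coincide; monotonicity of the estimate-successor operator in its first argument gives $S_{\mathrm{new}}\supseteq S'_{\mathrm{new}}$. Granting the analogous inclusion for the $\top$-components (the delicate point, discussed below), $\wbs_{\sf det}$-monotonicity holds. Since a larger estimate enables at least as many environment moves and every reset policy is available by $A$-determinism, replaying a common label and applying monotonicity also yields the ``for all successors of $v_2$'' direction, so that $\wbs_{\sf det}$ is a simulation relation compatible with $\bad$. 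Theorem~\ref{th-red} then upgrades it to a tba-simulation, the $V_A$ quantifier direction coming for free from $A$-determinism and $\wbs_{\sf det}$-monotonicity.

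The step I expect to be the real obstacle is the preservation of the $\top$-component, i.e. showing $S_{\top,\mathrm{new}}\subseteq S'_{\top,\mathrm{new}}$. This is delicate precisely because a \emph{larger} ambient estimate $S$ can create \emph{new} sources of approximation, so it is not a priori clear that the set of configurations that are ``surely not approximated'' shrinks monotonically as $S$ grows while $S_\top$ shrinks. I would verify this against the precise successor definitions of~\cite{BSJK-fossacs11}, checking that the two inclusions $S\supseteq S'$ and $S_\top\subseteq S'_\top$ are jointly stable under both kinds of move; this is the game-theoretic analogue of the monotone propagation of the bad predicate already exploited in Lemma~\ref{lem-li}. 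Once this inclusion is established, the $v_1\in\bad$ escape clause trivially discharges every remaining condition on bad states, completing the proof.
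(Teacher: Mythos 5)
Your proposal follows essentially the same route as the paper's proof: establish that $\wbs_{\sf det}$ is a partial order, then show that the moves of both players preserve the relation label by label --- which is exactly the combination of $A$-determinism and $\wbs$-monotonicity of the natural labeling (reset sets for Player~$A$, (action, guard) pairs for Player~$B$) that feeds Theorem~\ref{th-red}. The paper does not invoke Theorem~\ref{th-red} by name inside the proof, but its case analysis on ``moves of Player $A$'' and ``moves of Player $B$'' is precisely your monotonicity check, and your explicit verification of compatibility with $\bad$ (which the paper omits) and your reading of the statement's typo as $S_\top\subseteq S'_\top$ are both correct. The only substantive difference is that you defer the one genuinely nontrivial step --- monotone propagation of the $\top$-component --- to a later verification against~\cite{BSJK-fossacs11}, and this is exactly where the paper's proof does its (only) real work. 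It closes the point by writing out the successor formulas: for an $A$-move the new $\top$-set is $\{{\sf conf}_{[Y'\leftarrow 0]}\mid {\sf conf}\in S_\top\}$, and for a $B$-move it is $\bigcup_{{\sf conf}\in S_\top}{\sf Succ}_\top({\sf conf},(a,g))$. In both cases the $\top$-component of the successor is computed configuration-by-configuration from $S_\top$ alone, with no dependence on the ambient estimate $S$, so the inclusion $S_\top\subseteq S'_\top$ is preserved immediately and the interference you worry about (a larger $S$ creating new sources of approximation that shrink the surely-exact part) simply cannot arise in this formalisation. Had the successor of the $\top$-component depended on $S$, your plan would indeed have stalled at that point, so flagging it was the right instinct; but as the construction stands, the step is a one-line observation rather than an obstacle.
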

\begin{proof}
  First of all, $\wbs_{\sf set}$ is clearly a partial order.  It is
  thus sufficient to prove that this partial order is a simulation
  relation.  To do so, we simply prove that moves of Player $A$ and
  Player $B$ preserve the partial order.  Let
  $v^1=((S^1,S^1_\top),r^1)$ and $v^2=((S^2,S^2_\top),r^2)$ such that
  $v_1\wbs_{\sf set}v_2$.
\begin{itemize}
\item {\bf Moves of Player $A$} are resets of clocks.  Let
  $Y'\subseteq Y$ a set of clocks that $A$ can reset.  Then, the
  $Y'$-successors of $v^1$ and $v^2$ are respectively
  $v'^1=((S'^1,S'^1_\top),r'^1)$ and $v'^2=((S'^2,S'^2_\top),r'^2)$,
  where for $i=1$ or $2$, $r'^i=r^i_{[Y'\leftarrow 0]}$, $S'^1=\{{\sf
    conf}_{[Y'\leftarrow 0]} |{\sf conf}\in S^1\}$ and
  $S'^1_\top=\{{\sf conf}_{[Y'\leftarrow 0]} |{\sf conf}\in
  S^1_\top\}$. Then from $v_1\wbs_{\sf set}v_2$, we obtain
  $r'^1=r'^2$, $S'^1\supseteq S'^2$ and $S'^1_\top\subseteq S'^2_\top$
  and hence $v'_1\wbs_{\sf set}v'_2$.
\item {\bf Moves of Player $B$} are an action and a guard over the
  clocks of $Y$.  Let $(a,g)$ be a move of $B$. Then, the respective
  $(a,g)$-successors $v'^1$ and $v'^2$ of $v^1$ and $v^2$ are computed
  thanks to function ${\sf Succ}$ computing set of elementary
  successors of configurations: $v'^1=((S'^1,S'^1_\top),r'^1)$ and
  $v'^2=((S'^2,S'^2_\top),r'^2)$, where for $i=1$ or $2$, $r'^i=r^i$,
  $S'^1= \cup_{{\sf conf}\in S^1} {\sf Succ}({\sf conf},(a,g))$, and
  $S'^1_\top= \cup_{{\sf conf}\in S^1_\top} {\sf Succ}_\top({\sf
    conf},(a,g))$, with ${\sf Succ}_\top$ the function ${\sf Succ}$
  such that only configurations marked $\top$ are kept.  From
  $v_1\wbs_{\sf set}v_2$, we thus obtain that $r'^1=r'^2$,
  $S'^1\supseteq S'^2$ and $S'^1_\top\subseteq S'^2_\top$ and hence
  $v'_1\wbs_{\sf set}v'_2$.
\end{itemize}
\end{proof}

\begin{figure}[t]
\begin{center}
\scalebox{.75}{
\begin{tikzpicture}[->,>=stealth',shorten >=1pt,auto,node distance=2cm,
                     semithick]

   \tikzstyle{every state}=[text=black]

   \node[state, fill=white] [rectangle] (A) {$\begin{array}{ll}  
 \ell_0,x-y=0,\top&\multirow{2}{*}{\{0\}} \\\ell_1,0<x-y<1,\top&  
\end{array}$};
\node(vA) [right of=A, node distance=2.1cm] {$\v_1$};
   \node[state, fill=white] [rectangle] (A0) [above of=A, node distance=3.4cm,  
yshift=0cm] {$\begin{array}{ll} 
\ell_0,x-y=0,\top&\{0\}\end{array}$};
\node(vA0) [right of=A0, node distance=1.8cm] {$\v_0$};
   \node[state, fill=white] (A01) [below of=A0, node distance=1.7cm,  
yshift=0cm] {};   
   \node[state, fill=white] (A1) [below of=A, node distance=1.7cm,  
yshift=0cm] {};   
    \node[state, fill=white] [rectangle] (C)  [right of=A1, node  
distance=0cm, yshift=-1.7cm] {$\begin{array}{ll}  
 \ell_0,x-y=0,\top&\multirow{3}{*}{\{0\}} \\ \ell_1,0<x-y<1,\top&\\\ell_2,x-y=0,\bot&  
\end{array}$}; 
\node(vC) [right of=C, node distance=2.1cm] {$\v_2$};
   \node[state, fill=white] (C1) [left of=C, node distance=4.2cm,  
yshift=1.7cm] {};   

\node[fill=white, color=white] (H) [above of=A0, node  
distance=1.2cm] {};
\node[fill=white, color=white] (H1) [below of=A01, node  
distance=1cm, xshift=-1.5cm] {};
\node[fill=white, color=white] (H2) [below of=A1, node  
distance=.8cm, xshift=1.2cm] {};

\draw(3,2) node [ellipse,draw,minimum height=1.8cm,minimum width=3.2cm,fill=gray!30] (P1) {$\mathcal{G}_{\mathcal{A}_1}$};
\draw(3,-1.8) node [ellipse,draw,minimum height=1.8cm,minimum width=3.2cm,fill=gray!30] (P2) {$\mathcal{G}_{(\mathcal{A}_1,\mathcal{A}_2),(Y,M)}$};

   \path (H) edge node [left] {} (A0)
         (A) edge node [right] {$(0,1),a$} (A1)
         (A1) edge node [left] {$\{y\}$} (C)
         (A1) edge node [above] {$\emptyset$} (H2)
         (A0) edge node [right] {$(0,1),a$} (A01)
         (A01) edge node [right] {$\{y\}$} (A)
         (A01) edge node [above] {$\emptyset$} (H1)
        (C) edge [=<,bend left=50,dashed] node [above,sloped,yshift=.2cm] {$\wbs$} (A)
        (C) edge [dashed,line width=2pt] node [below] {$b$} (2,-2)
        (A) edge [dashed,line width=2pt] node [below] {$b$} (2,1.8)
        (C) edge node [sloped,yshift=0cm,above] {$0<x<1,a$} (C1)
        (C1) edge [bend right=20] node [below] {$\{y\}$} (C)
;

  \node[state, fill=white] (Am) [left of=A0, node distance=7cm] {$\ell_0$};
  \node[state, fill=white] (Bm) [below of=Am, node distance=2.5cm] {$\ell_1$};
  \node[state, fill=white] (Dm) [below of=Am, node distance=5cm] {$\ell_2$};
  \node[state, fill=white, color=white] (Em) [above of=Am, node distance=1.5cm, yshift=0cm] {};
\draw(-5,3) node [ellipse,draw,minimum height=1.8cm,minimum width=3.2cm,fill=gray!30] (P1) {$\mathcal{A}_1$};
\draw(-5,0) node [ellipse,draw,minimum height=1.8cm,minimum width=3.2cm,fill=gray!30] (P2) {$\mathcal{A}_2$};
;
  
  \path (Am) edge node [right,xshift=-2cm] {$0<x<1,a$} (Bm)
        (Em) edge (Am)
        (Am) edge [loop left] node [left] {$0<x<1,a$} (Am)
	(Bm) edge node [right,xshift=-2.5cm] {$0<x<1,a,\{x\}$} (Dm)
        (Dm) edge [dashed,line width=2pt] node [below,xshift=.1cm] {$b$} (-6,-.2)
        (Bm) edge [dashed,line width=2pt] node [below,xshift=.1cm] {$b$} (-6,2.8)
;
\draw [-,fill=red,color=red]
(0.5,1.3) -- (2.3,1) -- (2.3,.95) -- (0.5,1.25) ;
\draw [-,fill=red,color=red]
(0.7,1.7) -- (1.8,.7) -- (1.8,.65) -- (0.7,1.65) ;
\end{tikzpicture}}
\caption{Excerpt of an example of game.}
\label{fig:ex-game-beg}
\end{center}
  \vspace{-.7cm}
\end{figure}
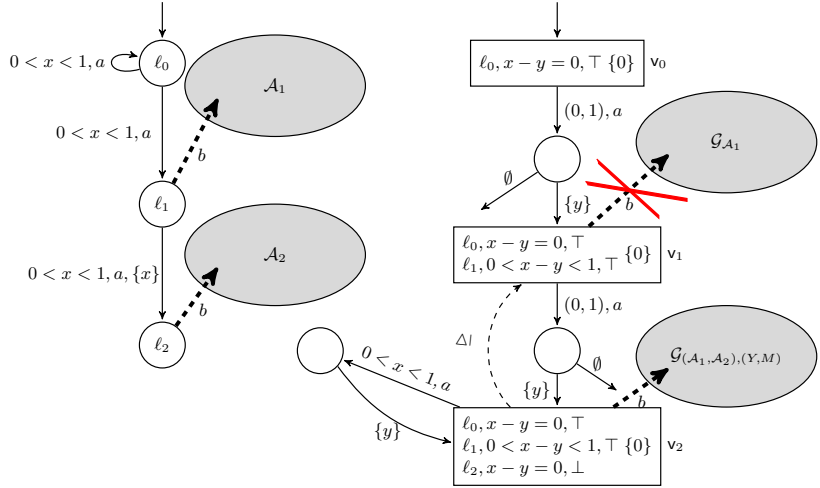

As an example, Figure~\ref{fig:ex-game-beg} presents an excerpt of the
construction of the game (right) for a TA (left), with $Y=\{y\}$.
Gray ellipses symbolise potentially large part of the TA (resp. the
game) which we do not detail here.  To illustrate the tba-simulation,
let us consider the two states $\v_2 = (\{(
\ell_0,x-y=0),(\ell_1,0<x-y<1),(\ell_2,x-y=0)\},\{(
\ell_0,x-y=0),(\ell_1,0<x-y<1)\}, \{0\})$ and $\v_1 = (\{(
\ell_0,x-y=0),(\ell_1,0<x-y<1)\},\{( \ell_0,x-y=0),(\ell_1,0<x-y<1)\},
\{0\})$ in the game. One can easily check that $\v_2\wbs \v_1$. Thanks
to optimisation 3 (see Section~\ref{sec:effic-comp-succ}),
Algorithm~\ref{algonous} applied to this example will avoid exploring
$\mathcal{G}_{\mathcal{A}_1}$

\appendix

\section{The original OTFUR algorithm\label{sec:orig-otfur-algor}}
For the sake of completeness, Algorithm~\ref{algo:otfur_plus} recalls
the original OTFUR algorithm\cite{CDFLL-concur05}, adapted to the case
of safety games.
\begin{algorithm}
 \caption{{\sf OTFUR} \cite{CDFLL-concur05} algorithm for
safety games}\label{algo:otfur_plus} \KwData{G,
$I$}
\tcp{Initialization}
${\sf Passed} := \{I\};$ ${\sf Depend}(I):= \varnothing;$\\
$\textbf{for} \text{ all position }v\textbf{ do } {\sf Losing}[v]:= {\mathsf{false}};$ \\
${\sf Waiting} := \{(I, v')\in E\};$\\

\tcp{Saturation}
\While{${\sf Waiting} \neq \varnothing \wedge \neg {\sf Losing}[I]$}{
$e=(v, v') := {\it pop}({\sf Waiting});$\\
\If{$v'\not\in {\sf Passed}$} {
    ${\sf Passed}:= {\sf Passed}\cup \{v'\};$ \\
    ${\sf Losing}[v']:=  v'\in {\sf Bad};$\\
    ${\sf Depend}[v']:=\{(v,v')\};$\\
  \If{${\sf Losing}[v']$} {
    ${\sf Waiting}:= {\sf Waiting} \cup \{e\};$ \tcp{add $e$ for reevaluation}
   }
  \Else{
    ${\sf Waiting}:= {\sf Waiting} \cup \{(v', v'')\in E\};$\\
  }
}\Else{\tcp{reevaluation}
  ${\sf Losing}^{*} := \begin{array}[t]{ll} & v\in V_A
            \wedge \bigwedge_{v'',(v,v'')\in E}{\sf Losing}[v'']\\
            \vee & v\in V_B\wedge \bigvee_{v'',(v,v'')\in E}{\sf Losing}[v''];
        \end{array}$\\
        \If{${\sf Losing}^*$}{
            ${\sf Losing}[v]:= \mathsf{true}$;\\
          ${\sf Waiting}:= {\sf Waiting} \cup {\sf
            Depend}[v]$\tcp{back propagation}
        }
        \textbf{if} $\neg {\sf Losing}[v']$ \textbf{then} ${\sf Depend}[v']:= {\sf Depend}[v']\cup\{e\}$
} }
\KwRet{$\neg {\sf Losing}[I]$}
\end{algorithm}


\end{document}